\documentclass{ifacconf}

\usepackage{graphicx}      
\usepackage{natbib}        
\usepackage{amsmath}
\usepackage{amsfonts}
\usepackage{upgreek}
\usepackage{enumitem}
\usepackage{bm}
\usepackage{dsfont}
\usepackage{extpfeil}
\usepackage{mathtools}
\mathtoolsset{showonlyrefs}

\newcommand{\sgn}[2]{\left\lceil#1\right\rfloor^{#2}}

\newcommand{\mf}{\mathbf}
\usepackage[dvipsnames]{xcolor}

\begin{document}
\begin{frontmatter}

\title{Super-twisting over networks: A Lyapunov approach for distributed differentiation} 

\thanks[footnoteinfo]{
\tiny{This work was supported via projects PID2021-124137OB-I00 and 
PID2024-159279OB-I00 funded by MICIU/AEI/10.13039/501100011033 and by ERDF/EU and via 
project REMAIN S1/1.1/E0111 (Interreg Sudoe Programme, ERDF). Grant reference BG24/00121 funded by MICIU/AEI/10.13039/501100011033.
This work as also funded by the Gobierno de Aragón under Project DGA T45 23R, 
and by Spanish grant FPU20/03134.
{\textcolor{red}{This is a preprint of a manuscript submitted for possible publication. Copyright may be transferred without notice, after which this version may no longer be accessible.}}
}}

\author[First]{Rodrigo Aldana-López} 
\author[First]{Irene Pérez-Salesa} 
\author[Second]{David Gomez-Gutierrez} 
\author[First]{Rosario Aragüés} 
\author[First]{Carlos Sagüés} 

\address[First]{Departamento de Informática e Ingeniería de Sistemas (DIIS) and Instituto de Investigación en Ingeniería de Aragón (I3A), 
\\ Universidad de Zaragoza, 50018 Zaragoza, Spain.
(e-mail: rodrigo.aldana.lopez@gmail.com, i.perez@unizar.es, raragues@unizar.es, csagues@unizar.es)}

\address[Second]{Tecnológico Nacional de México, Instituto Tecnológico José Mario Molina Pasquel y Henríquez, 
Cam. Arenero 1101, 45019 Zapopan, Jalisco, Mexico.
(e-mail: david.gomez.g@ieee.org)}

\begin{abstract}                
We study distributed differentiation, where agents in a networked system estimate the average of local time-varying signals and their derivatives under mild assumptions on the agents' signals and their first and second derivatives. Existing sliding-mode methods provide only local stability guarantees and lack systematic gain selection. By isolating the structural features shared with the super-twisting algorithm and encoding them into an abstract model, we construct a Lyapunov function enabling systematic gain design and proving global finite-time convergence to consensus for the distributed differentiator. {Building on this framework, we develop an event-triggered hybrid system implementation using time-varying and state dependent threshold rules and derive minimum inter-event time guarantees and accuracy bounds that quantify the trade-off between estimation accuracy and communication effort.}
\end{abstract}
\begin{keyword}
networked systems, event-triggered hybrid system, multi-agent systems, sensor networks
\end{keyword}
\end{frontmatter}

\section{Introduction}

The distributed differentiation problem  concerns a network of agents with local time-varying signals that aim to collaboratively estimate the derivative of their average in a distributed fashion, even under some form of persistent variation of the signals. This task can be understood as an extension of the Dynamic Average Consensus (DAC) problem. Applications of DAC include collective estimation in sensor networks, coordination in power grids, and formation control of multi-agent systems \citep{Solmaz2017}. Structurally, DAC algorithms are useful for hierarchical observer-based distributed coordination algorithms \citep{medcho, nhsObs}. One of the most popular applications of DAC nowadays is in distributed optimization \citep{guido2025}, where a DAC block is used for distributed gradient tracking. Recent advances in DAC include acceleration techniques \citep{edusebas2023} and resiliency mechanisms for unreliable communication \citep{iqbal2022}. Extending DAC to distributed differentiation is natural, since many applications require not only the average signal but also its derivatives, for instance, {to design distributed trajectory tracking controllers, sensor fusion, or to estimate higher-order behaviors of the network \citep{pierri2022,medcho,aldana2023a}}. Nevertheless, most DAC approaches fail to converge to the true derivative under persistently varying signals, achieving only ultimately bounded stability due to the inherent limitations of their linear structure. To address this issue, \citep{edcho} introduced the concept of Exact Dynamic Consensus (EDC) based on High-Order Sliding-Mode (HOSM) techniques. The first application of sliding modes in this context relied on First-Order Sliding Modes (FOSM)~\citep{freeman2019}. However, FOSM schemes are prone to the well-known chattering phenomenon \citep[Ch.~3]{fridman2002}, consisting of high-frequency oscillations that render implementations sensitive to noise and delays. Additionally, the approach by~\cite{freeman2019} recovers only the average signal but not its derivative. HOSM provide a principled alternative, mitigating chattering while enabling exact computation of higher-order derivatives when applied to observer design. The Exact Dynamic Consensus of High-Order (EDCHO) algorithm \citep{edcho} leverages these properties to provide the first distributed differentiator in the literature, later extended to a Robust EDCHO (REDCHO) in \citep{redcho}. A complete distributed differentiator was presented in \citep{aldana2023}, including a noise robustness analysis that does not require external local differentiators. 
Despite their versatility, current formulations of EDC-based distributed differentiators lack systematic parameter tuning guidelines, leaving practitioners to rely on simulations for gain selection. Moreover, for REDCHO only local stability has been established, even though simulations suggest that global stability should also be provable. 

The stability analysis of these algorithms is closely tied to the theory of HOSM differentiators~\citep{levant2003}, particularly the super-twisting algorithm \citep{Levant1998}, which is the prototypical case for a first-order differentiator. For this reason, we briefly review existing stability proof approaches for super-twisting and identify the elements that can be carried over to the distributed differentiator setting. Early analyses exploited its two-dimensional structure via majorant curves, which do not generalize to higher-order or multi-agent settings. Homogeneity-based approaches were later developed \citep{levant2003,levant2005} and used in the original analysis of EDCHO. However, they do not yield explicit gain conditions. More principled Lyapunov analyses of super-twisting appeared in \citep{yuri2010,moreno2012}, later refined to obtain explicit feasible gains and settling-time estimates in \citep{seeber2017,seeber2018}, with extensions to arbitrary-order algorithms in \citep{Cruz2019,seeber2023}. Structural approaches have also been proposed: a suitable change of variables can transform super-twisting into a linear system scaled by a nonlinear scalar function, from which a Lyapunov function is inferred. This idea was applied in \citep{hernan2019} to generalized super-twisting–like systems and in \citep{seeber2021} to predefined-time differentiators. However, such a transformation is not directly applicable in the multi-agent setting, where coupling prevents a unified change of variables. {Recent works in \citep{geromel2026, moreno2021multivariable,lopez2019generalised} introduce generalized multi-variable super-twisting algorithms with constructive Lyapunov analysis. However, they do not accommodate the graph-dependent coupling required in multi-agent settings.}

In this work, we develop a Lyapunov framework for the first-order distributed differentiator. We extract the minimal structural features shared with super-twisting and formalize them into an abstract super-twisting system. By constructing a Lyapunov function for this abstract system, we show that the stability proofs for both the super-twisting algorithm and the distributed differentiator are particular instances. This unifying perspective yields explicit gain design conditions and extends the theoretical guarantees of REDCHO beyond local stability. {We show that this framework is advantageous for the computation of accuracy bounds in alternative event-triggered implementations.} 

\subsection{Notation}
\label{sec:notation}
Let $\mathds{1}=[1,\dots,1]^\top$ denote the vector of ones of appropriate dimension.  
Define $\text{sign}(x)=1$ if $x>0$, $\text{sign}(0)=0$, and $\text{sign}(x)=-1$ if $x<0$.  
For $x\in\mathbb{R}$ and $\alpha>0$, let $\lceil x\rfloor^\alpha := |x|^\alpha \text{sign}(x)$, and set $\lceil x\rfloor^0 := \text{sign}(x)$.  
For a vector $\mf{x}=[x_1,\dots,x_n]^\top \in \mathbb{R}^n$, define
$
\sgn{\mf{x}}{\alpha} := \begin{bmatrix}
\sgn{x_1}{\alpha},\dots,\sgn{x_n}{\alpha}
\end{bmatrix}^\top
$
with $\alpha \ge 0$.
For $\mf{r}=[r_1,\dots,r_n]^\top$, define the weighted norm
$
\|\mf{x}\|_\mf{r} := \sum_{i=1}^n |x_i|^{1/r_i}.
$
For two vectors $\mf{x}, \mf{y}$, let their inner product be written as $\langle \mf{x}, \mf{y}\rangle := \mf{x}^\top\mf{y}$.

\section{Distributed Differentiation}
\label{sec:distributed:differentiation}
We formulate the problem of first-order distributed differentiation as follows.  
Consider $N \in \mathbb{N}$ agents in a networked system interacting over a communication network represented by an undirected graph $\mathcal{G}=(\mathcal{V},\mathcal{E})$, where $\mathcal{V}=\{1,\dots,N\}$ is the set of nodes and $\mathcal{E}\subseteq\mathcal{V}\times\mathcal{V}$ is the set of edges.  
An edge $(i,j)\in\mathcal{E}$ represents a bidirectional communication link between agents $i$ and $j$.  
For each agent $i\in\mathcal{V}$, let $\mathcal{N}_i\subseteq\mathcal{V}$ denote its set of neighbors.

Each agent $i\in\mathcal{V}$ has access to a local time-varying signal $s_i(t)$, {assumed to be twice differentiable.} 
The collective objective is for all agents to recover the global average
$\bar{s}(t) = \frac{1}{N}\sum_{i=1}^N s_i(t),$
together with its time derivative $\dot{\bar{s}}(t)$, using only one-hop local information exchange over $\mathcal{G}$. For the analysis, we work under the following boundedness condition.

\begin{assum}
\label{as:bound}
Given $\gamma\geq 0$, there exists a known $L\geq 0$ such that for each agent $i$ and all $t\geq 0$,
\[
\Big|\ddot{\bar s}(t) - \ddot{s}_i(t) 
+ 2\gamma \big(\dot{\bar s}(t) - \dot{s}_i(t)\big) 
+ \gamma^2 \big(\bar s(t) - s_i(t)\big)\Big| \leq \frac{L}{\sqrt{N}}.
\]
\end{assum}

{Assumption~\ref{as:bound} characterizes the level of mismatch between the local signals and their average that the distributed differentiation algorithm is able to tolerate. A practical way to reason about this condition is through local bounds on the signals and their derivatives. For instance, when $\gamma=0$, uniform bounds on the second derivatives $\ddot s_i(t)$ are sufficient to guarantee Assumption~\ref{as:bound}, recovering a condition of the same type commonly used in standard sliding-mode differentiators. For $\gamma>0$, boundedness of $s_i(t)$ and its derivatives can be used locally  to imply the assumption, which is satisfied in many applications e.g., where reference signals can be decomposed as sinusoids. It is worth noting that $N$ can be obtained in a distributed fashion in finite time using known methods~\citep{medcho}.
}

To this end, the REDCHO instance for first-order distributed differentiation protocol proposed in \cite{redcho} takes the form:

\begin{equation}
\label{eq:redcho}
\begin{aligned}
&\text{\bf Protocol dynamics:}\\[0.4em]
&\begin{array}{ll}
\dot{{\upeta}}_{i,0}(t)&= k_{0}\sqrt{L} \sum_{j\in\mathcal{N}_i}\sgn{\hat{s}_{i,0}(t) - \hat{s}_{j,0}(t)}{1/2} 
\!+\! {\upeta}_{i,1}(t)\! -\! \gamma {\upeta}_{i,0}(t)\\[0.8em]
\dot{{\upeta}}_{i,1}(t)&=k_1 L \sum_{j\in\mathcal{N}_i}\text{sign}\big(\hat{s}_{i,0}(t) - \hat{s}_{j,0}(t)\big) 
-\gamma {\upeta}_{i,1}(t)
\end{array}\\[0.6em]
&\text{\bf Shared information:}\\[0.6em]
&\hat{s}_{i,0}(t) = s_i(t) - {\upeta}_{i,0}(t),\\[0.6em]
&\text{\bf Differentiator output:}\\[0.6em]
&\hat{s}_{i,1}(t) = \dot{s}_i(t) - {\upeta}_{i,1}(t) + \gamma \upeta_{i,0}(t),
\end{aligned}
\end{equation}
where $k_0,k_1>0$, $\gamma\geq 0$ are design parameters, and $L$ is chosen from Assumption \ref{as:bound}.

{We briefly explain the intuition behind \eqref{eq:redcho}. The correction terms 
$\sgn{\hat{s}_{i,0}(t)-\hat{s}_{j,0}(t)}{1/2}$ and 
$\text{sign}(\hat{s}_{i,0}(t)-\hat{s}_{j,0}(t))$ are introduced to enforce consensus using only relative information. The fractional power term is selected to induce weighted homogeneity, a key structural property that enables the Lyapunov based stability analysis. The discontinuous sign term complements this mechanism by rejecting unknown bounded effects induced by the tracking of the average signal and its derivative.  Both correction terms depend exclusively on differences $\hat{s}_{i,0}(t)-\hat{s}_{j,0}(t)$, so agent $i$ only needs to communicate $\hat{s}_{i,0}(t)$. }

{The protocol further introduces the linear terms $-\gamma\upeta_{i,0}(t)$ and $-\gamma\upeta_{i,1}(t)$ to remove the classical initialization constraints of DAC schemes, which would otherwise require $\sum_{i=1}^N\upeta_{i,\mu}(0)=0, \mu=0,1$. When $\gamma>0$, any initial condition mismatch decays exponentially with rate $\gamma$.} {Finally, once consensus is achieved, $\upeta_{i,0}(t)$ converges to the exact correction required to cancel local offsets, yielding $\hat{s}_{i,0}(t)=\bar{s}(t)$. The derivative is then reconstructed via $\hat{s}_{i,1}(t)=\dot{s}_i(t)-\dot{\upeta}_{i,0}(t)$. Under convergence, the internal dynamics satisfy $\dot{\upeta}_{i,0}(t)=\upeta_{i,1}(t)-\gamma\upeta_{i,0}(t)$, which leads directly to the differentiator output definition in \eqref{eq:redcho}.
}  
\begin{rem}
\label{rem:no:diff}
{A modification of \eqref{eq:redcho} was proposed in \citep{aldana2023} to make the differentiation output $\hat{s}_{i,1}(t)$ not depend on the local derivative $\dot{s}_i(t)$. Such modified protocol can be written as:}
{
\begin{equation}
\label{eq:redcho:2}
\begin{aligned}
&\begin{array}{ll}
\dot{{\upeta}}'_{i,0}(t)
&= k_{0}\sqrt{L} \sum_{j\in\mathcal{N}_i}
\sgn{\hat{s}'_{i,0}(t) - \hat{s}'_{j,0}(t)}{1/2}
 - \gamma {\upeta}'_{i,0}(t)\\
&+ {\upeta}'_{i,1}(t) + 
+k_{0}\sqrt{L}\sgn{\hat{s}'_{i,0}(t) - \hat{s}_{i,0}(t)}{1/2} \\[0.8em]
\dot{{\upeta}}'_{i,1}(t)
&=k_1 L \sum_{j\in\mathcal{N}_i}
\text{sign}\big(\hat{s}'_{i,0}(t) - \hat{s}'_{j,0}(t)\big) 
-\gamma {\upeta}'_{i,1}(t) \\
&
+k_{0}\sqrt{L}\ \text{sign}(\hat{s}'_{i,0}(t) - \hat{s}_{i,0}(t)) \\[0.8em]
\dot{{\upeta}}_{i,0}(t)
&= k_{0}\sqrt{L}
\sgn{\hat{s}_{i,0}(t) - \hat{s}'_{i,0}(t)}{1/2}
+ {\upeta}_{i,1}(t) - \gamma {\upeta}_{i,0}(t)\\[0.8em]
\dot{{\upeta}}_{i,1}(t)
&=k_1 L \text{sign}\big(\hat{s}_{i,0}(t) - \hat{s}'_{i,0}(t)\big) 
-\gamma {\upeta}_{i,1}(t)
\end{array}
\end{aligned}
\end{equation}}
{where 
$$
\begin{aligned}
\hat{s}_{i,0}'(t) = s_i(t) - {\upeta}_{i,0}'(t),\quad
\hat{s}_{i,0}(t) =  - {\upeta}_{i,0}(t) 
\end{aligned}
$$
and $\hat{s}_{i,1}(t) = 2( \gamma \upeta_{i,0}(t)- {\upeta}_{i,1}'(t))$ is the distributed differentiator output. As discussed in \citep{redcho}, this protocol coincides exactly with REDCHO when applied to an augmented network in which virtual nodes are appended to each agent. Consequently, the convergence analysis developed for REDCHO applies directly and yields the same conclusions for \eqref{eq:redcho:2}.}
\end{rem}

The first main result of this work is given below, and characterizes convergence for \eqref{eq:redcho}.

\begin{thm}
\label{th:main}
Suppose Assumption~\ref{as:bound} holds for some $L,\gamma \ge 0$.  
Then, there exist gains $k_0,k_1>0$ such that, for any initial conditions 
${\upeta}_{i,0}(0),{\upeta}_{i,1}(0)\in\mathbb{R}$, the trajectories of \eqref{eq:redcho} satisfy
\[
\hat{s}_{i,0}(t) = \hat{s}_{j,0}(t), \quad 
\hat{s}_{i,1}(t) = \hat{s}_{j,1}(t),
\]
for all $(i,j)\in\mathcal{E}$ and all $t\ge T$, for some finite $T\ge 0$.  
Moreover, each agent asymptotically recovers the global signals,
\[
\lim_{t\to\infty} |\hat{s}_{i,0}(t)-\bar{s}(t)|=0, 
\qquad 
\lim_{t\to\infty} |\hat{s}_{i,1}(t)-\dot{\bar{s}}(t)|=0.
\]
\end{thm}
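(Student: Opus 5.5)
We will reduce \eqref{eq:redcho} to an error system in edge coordinates with the structure of a super-twisting algorithm, and then build a Lyapunov function for it.

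\textbf{Error coordinates.} Stack $\bm{\upeta}_\mu=[\upeta_{1,\mu},\dots,\upeta_{N,\mu}]^\top$ and $\mf{s}=[s_1,\dots,s_N]^\top$. Let $D$ be an incidence matrix of $\mathcal{G}$ with one column per edge, so that $D^\top\mathds{1}=0$, and write the coupling terms as $D\,\sgn{D^\top\hat{\mf{s}}_0}{1/2}$ and $D\,\mathrm{sign}(D^\top\hat{\mf{s}}_0)$. Define the errors
\[
\mf{e}_0=\hat{\mf{s}}_0-\bar s\mathds{1}=\mf{s}-\bar s\mathds{1}-\bm{\upeta}_0,\qquad
\mf{e}_1=\dot{\mf{s}}-\dot{\bar s}\mathds{1}-\bm{\upeta}_1+\gamma\bm{\upeta}_0-\gamma\mf{e}_0 .
\]
Differentiating and using \eqref{eq:redcho}, we expect
\[
\dot{\mf{e}}_0=-k_0\sqrt{L}\,D\sgn{D^\top\mf{e}_0}{1/2}+\mf{e}_1,\qquad
\dot{\mf{e}}_1=-k_1L\,D\,\mathrm{sign}(D^\top\mf{e}_0)-\gamma k_0\sqrt{L}\,D\sgn{D^\top\mf{e}_0}{1/2}+\bm{\delta}(t).
\]
Here $\bm\delta$ collects $\ddot{\mf{s}}-\ddot{\bar s}\mathds{1}+2\gamma(\cdot)+\gamma^2(\cdot)$, so Assumption~\ref{as:bound} gives $\|\bm\delta\|_2\le L$. (We must check the exact bookkeeping of the $\gamma$ terms; the $\gamma$-shift in the definition of $\mf{e}_1$ is chosen precisely so that the perturbation is the combination appearing in the assumption.)

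\textbf{Consensus and disagreement parts.} Projecting onto $\mathds{1}$ kills every $D$ term. The average components therefore obey a linear system with poles at $-\gamma$, and they decay exponentially, or vanish exactly, since $\mathds{1}^\top\bm\delta=0$ when $\gamma=0$ and the initial sums are zero. The disagreement part is what needs the work. Here we set the edge variables $\mf{z}_0=D^\top\mf{e}_0$ and $\mf{z}_1=D^\top\mf{e}_1$, which gives the abstract multi-agent super-twisting form
\[
\dot{\mf z}_0=-k_0\sqrt{L}\,\mathcal{L}_e\sgn{\mf z_0}{1/2}+\mf z_1,\qquad \dot{\mf z}_1\in -k_1L\,\mathcal{L}_e\,\mathrm{sign}(\mf z_0)+\text{(bounded)},
\]
with edge Laplacian $\mathcal{L}_e=D^\top D$.

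\textbf{Lyapunov function.} The main step is a Lyapunov function that mimics the generalized-form quadratic of Moreno and Osorio, but is written through the inner product with $D$. The candidate is
\[
V=\alpha_1\,\mathds{1}^\top|\mf z_0|+\tfrac{\alpha_2}{2}\|\mf e_1^\perp\|^2-\alpha_3\langle D\sgn{\mf z_0}{1/2},\mf e_1\rangle\cdot(\text{homogeneous weight})
\]
or an equivalent homogeneous form of degree $3$ in the weights $\mf r=(2,1)$, built on $\|\cdot\|_{\mf r}$. The key structural facts we will exploit are the following:
\begin{itemize}[label={}]
\item $\langle D\,\mathrm{sign}(\mf z_0),\mf e_0\rangle=\|\mf z_0\|_1$ and $\langle D\sgn{\mf z_0}{1/2},\mf e_0\rangle=\sum|z_{0,k}|^{3/2}$;
\item $\|D\,\mathrm{sign}(\mf z_0)\|\ge c>0$ whenever $\mf z_0\ne0$, because $\mathcal{G}$ is connected and $\mathds{1}^\top\mf e_0$ is controlled;
\item the perturbation is bounded by $L$, which is dominated by the $k_1L$ term once $k_1>1/c$.
\end{itemize}
With these, $\dot V$ can be bounded by $-\kappa V^{2/3}$ outside a set where the average error is nonzero. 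The gains $k_0,k_1$ are then selected from explicit matrix and scalar inequalities, in the same way as for classical super-twisting.

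\textbf{Main obstacle and conclusion.} The hard part is the Lipschitz-failure and cross term when $\mf z_0$ has mixed zero and nonzero components, together with the fact that $\mathrm{sign}$ acts edgewise and not on the agent errors. To handle this, we will first try a non-smooth Lyapunov analysis via Filippov set-valued derivatives. We will also lower-bound $\|D\,\mathrm{sign}(\mf z_0)\|$ and $\|D\sgn{\mf z_0}{1/2}\|$ in terms of $\|\mf e_0^\perp\|$, using connectivity through a spanning tree argument. Once $V$ reaches zero in finite time $T$, we get $\mf z_0\equiv0$ and $\dot{\mf z}_0\equiv0$, hence $\mf z_1=0$, which is consensus of $\hat s_{i,0}$ and $\hat s_{i,1}$ on $t\ge T$. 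The remaining average mode decays as $e^{-\gamma t}$, which gives the stated limits.
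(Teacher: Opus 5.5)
Your proposal has a genuine gap at its central step: it never produces a Lyapunov function that works, and the Moreno--Osorio-type candidate you sketch cannot work for this system. As written, the candidate is not pinned down. The ``homogeneous weight'' factor is unspecified, and its three terms have weighted degree $2$ for weights $(2,1)$, which would give $\dot V\le-\kappa V^{1/2}$ rather than $V^{2/3}$.

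More importantly, the cross term breaks down. In scalar super-twisting, differentiating $\sgn{x_0}{1/2}x_1$ produces a single factor $|x_0|^{-1/2}$ that factors out of a quadratic form. Here, differentiating $\langle \mf D\sgn{\mf z_0}{1/2},\mf e_1\rangle$ gives $\tfrac12\langle \mathrm{diag}(|\mf z_0|^{-1/2})\dot{\mf z}_0,\mf z_1\rangle$ with $\dot{\mf z}_0=-k_0\sqrt{L}\,\mf D^\top\mf D\sgn{\mf z_0}{1/2}+\mf z_1$. The edge Laplacian $\mf D^\top\mf D$ couples adjacent edges, so the $k$-th entry of $\mf D^\top\mf D\sgn{\mf z_0}{1/2}$ does not vanish as $z_{0,k}\to0$. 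After completing squares you are left with a positive term of order $|z_{0,k}|^{-1/2}\big(\mf D^\top\mf D\sgn{\mf z_0}{1/2}\big)_k^2$. It blows up near $\{z_{0,k}=0\}$ and cannot be compensated by any choice of $\alpha_1,\alpha_2,\alpha_3$ unless $\mf D^\top\mf D$ is a multiple of the identity, i.e.\ $N=2$. This is exactly the obstruction recorded in Remark~\ref{rem:mgsta}. It is the ``main obstacle'' you name, and passing to Filippov set-valued derivatives does not remove it.

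Likewise, $\|\mf D\,\mathrm{sign}(\mf z_0)\|\ge c$ is not the property that dominates the disturbance. In $\dot V$ the disturbance enters as $\langle\partial_{\mf e_1}V,\bm\delta\rangle$, and a norm bound says nothing about $\langle\partial_{\mf e_1}V,\mf D\,\mathrm{sign}(\mf z_0)\rangle$. What is needed is the directional bound $\langle\mf D\,\mathrm{sign}(\mf D^\top\mf e_0),\mf e_0\rangle=\|\mf D^\top\mf e_0\|_1\ge\sqrt{\lambda_{\mathcal G}}\|\mf e_0\|$ on $\mathds{1}^\perp$. It must be paired with a $V$ whose $\mf e_1$-gradient is parallel to $\mf e_0$ wherever the $k_0$-dissipation vanishes.

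The missing idea is to avoid differentiating $\sgn{\cdot}{1/2}$ altogether.
\begin{itemize}
\item Write $\mf D\sgn{\mf D^\top\mf e_0}{1/2}=\nabla U(\mf e_0)$ with $U(\mf e_0)=\tfrac23\sum_\ell|\mf d_\ell^\top\mf e_0|^{3/2}$, which is strictly convex on $\mathds{1}^\perp$.
\item Take $V=U(\mf x_0)+(1+\beta)U^*(\mf x_1)-\langle\mf x_0,\mf x_1\rangle$, where $U^*$ is the convex conjugate (homogeneous of degree $3$) and $\beta\ge7$.
\item The cross term is bilinear, so $\dot V$ involves only $\nabla U$ and $\nabla U^*$, and Fenchel--Young gives positive definiteness.
\item One obtains $\dot V\le-k_0\|\nabla U(\mf x_0)-\mf x_1\|^2+\Pi(\mf x)$. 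On $\{\nabla U(\mf x_0)=\mf x_1\}$ one has $\mf x_0=\nabla U^*(\mf x_1)$, so $\partial_{\mf x_1}V=\beta\mf x_0$, and the directional bound gives $\Pi<0$ whenever $k_1>1/\sqrt{\lambda_{\mathcal G}}$.
\item Weighted homogeneity then yields an admissible $k_0$ and $\dot V\le-c\underline{v}V^{2/3}$.
\end{itemize}

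Separately, your error coordinates are wrong. With $\mf e_1$ shifted by $-\gamma\mf e_0$ one actually gets $\dot{\mf e}_0=-k_0\sqrt L\,\mf D\sgn{\mf z_0}{1/2}+\mf e_1+\gamma\mf e_0$ and $\dot{\mf e}_1=-k_1L\,\mf D\,\mathrm{sign}(\mf z_0)+2\gamma k_0\sqrt L\,\mf D\sgn{\mf z_0}{1/2}-3\gamma\mf e_1-4\gamma^2\mf e_0+\bm\delta$, not what you wrote. The right shift is $+\gamma$, i.e.\ $\mf e_1=\mf P(\dot{\mf s}+\gamma\mf s-\bm{\upeta}_1)$. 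This gives \eqref{eq:error:system}, with pure damping $-\gamma\mf e_0,-\gamma\mf e_1$ and no $\sgn{\cdot}{1/2}$ term in $\dot{\mf e}_1$. Those damping terms are harmless for the conjugate-based $V$ because $\langle\nabla U(\mf x_0),\mf x_0\rangle\ge U(\mf x_0)$, $\langle\nabla U^*(\mf x_1),\mf x_1\rangle\ge U^*(\mf x_1)$ and $\beta\ge7$.

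Your handling of the average mode is fine. In particular, you are right that for $\gamma=0$ the recovery of $\bar s,\dot{\bar s}$ requires zero initial sums.
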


Theorem \ref{th:main} extends the local stability result of \cite{redcho} to global stability.  
It also strengthens \cite{edcho} by providing systematic design conditions for $k_0,k_1>0$ as a result of the proof.  

{To prove Theorem \ref{th:main}, we analyze its error system, developed as follows. We first rewrite \eqref{eq:redcho} in compact form as
$$
\begin{aligned}
\dot{\bm{\upeta}}_{0}(t) &= k_{0}\sqrt{L}\mf{D}\sgn{\mf{D}^\top \hat{\mf{s}}_{0}(t)}{\frac{1}{2}}
    + \bm{\upeta}_{1}(t) - \gamma \bm{\upeta}_{0}(t), \\[1em]
\dot{\bm{\upeta}}_{1}(t) &= k_{1}L\mf{D}\sgn{\mf{D}^\top \hat{\mf{s}}_{0}(t)}{0}
    - \gamma \bm{\upeta}_{1}(t),
\end{aligned}
$$
where the auxiliary variables are defined as
\begin{equation}
\label{eq:variables:hat:s}
\begin{aligned}
\hat{\mf{s}}_{0}(t) &= \mf{s}(t) - \bm{\upeta}_{0}(t), \\
\hat{\mf{s}}_{1}(t) &= \dot{\mf{s}}(t) + \gamma \mf{s}(t) - \bm{\upeta}_{1}(t),
\end{aligned}
\end{equation}
with $\bm{\upeta}_{\mu}(t) = [\upeta_{1,\mu}(t),\dots,\upeta_{N,\mu}(t)]^\top$, $\mu\in\{0,1\}$, 
$\mf{s}(t) = [s_{1}(t),\dots,s_{N}(t)]^\top$, and $\mf{D}$ denoting the $\{1,-1,0\}$ incidence matrix of $\mathcal{G}$. Define the consensus errors
\begin{equation}
\label{eq:variables:e}
\begin{aligned}
\mf{e}_{0}(t) = \mf{P}\hat{\mf{s}}_{0}(t), \quad \mf{e}_{1}(t) = \mf{P}\hat{\mf{s}}_{1}(t),
\end{aligned}
\end{equation}
with $\mf{P} = \mf{I} - \frac{1}{N}\mathds{1}\mathds{1}^\top$. Their dynamics are
\begin{equation}
\label{eq:error:system}
\begin{aligned}
\dot{\mf{e}}_{0}(t) &= -k_{0}\sqrt{L}\mf{D}\sgn{\mf{D}^\top \mf{e}_{0}(t)}{\frac{1}{2}}
    + \mf{e}_{1}(t) - \gamma \mf{e}_{0}(t), \\[1em]
\dot{\mf{e}}_{1}(t) &= -k_{1}L\mf{D}\sgn{\mf{D}^\top \mf{e}_{0}(t)}{0}
    - \gamma \mf{e}_{1}(t) + \mf{d}(t),
\end{aligned}
\end{equation}
where
$$
\mf{d}(t) = \mf{P}\big(\ddot{\mf{s}}(t) + 2\gamma \dot{\mf{s}}(t) + \gamma^{2}\mf{s}(t)\big).
$$
implying $\mf{d}(t)\in L\mathcal{D}$ due to Assumption \ref{as:bound} with:
$$
\begin{aligned}
\mathcal{D} &= \{\mf{d}\in\mathcal{X} : \|\mf{d}\|\leq 1\}.
\end{aligned}
$$
Convergence as stated in Theorem \ref{th:main} relies on the convergence of the dynamical system in \eqref{eq:error:system} towards the origin. Such a convergence proof is deferred to Section \ref{sec:proof}, as it requires the development of a novel Lyapunov-based approach, introduced in the next section, which we term the \emph{abstract super-twisting method}.
}

\section{Abstract Super-Twisting}
\label{sec:abstract:super:twisting}

In the following, we make extensive use of the concept of homogeneity, for which we refer the reader to the Appendix. To study \eqref{eq:error:system} we will consider an abstract version of it first. Let the abstract differential inclusion
\begin{equation}
\label{eq:abstract:sys}
    \begin{aligned}
\dot{\mf{e}}_0(t) &= -k_0 L^{1/2}\nabla U(\mf{e}_0(t)) + \mf{e}_1(t) - \gamma \mf{e}_0(t), \\[0.4em]
\dot{\mf{e}}_1(t) &\in -k_1 L \mathcal{S}(\mf{e}_0(t)) - \gamma \mf{e}_1(t) + L \mathcal{D},
    \end{aligned}
\end{equation}
with $\mathcal{D}\subset\mathcal{X}\subseteq\mathbb{R}^N$. The abstract super-twisting system in \eqref{eq:abstract:sys} is introduced as a prototypical model central to the analysis. The following assumptions are imposed and will later be verified on the error dynamics in \eqref{eq:error:system}, thereby linking the model directly to the proof of Theorem~\ref{th:main}.

\begin{assum}
\label{as:basic}
The following properties hold:
\begin{enumerate}[label=\roman*)]
    \item $\mathcal{X}\subseteq\mathbb{R}^N$ is a vector space.
    \item $U:\mathcal{X}\to\mathbb{R}$ is strictly convex, positive definite, differentiable everywhere, and homogeneous of degree $3/2$.
    \item There exists $c_\mathcal{S}>0$ such that $\mathcal{S}:\mathcal{X}\rightrightarrows \mathcal{X}$ satisfies
    \[
    \langle \mf{s}, \mf{e}_0 \rangle \geq c_\mathcal{S}\|\mf{e}_0\|, \quad \forall \mf{s}\in \mathcal{S}(\mf{e}_0),~ \forall \mf{e}_0\in \mathcal{X}.
    \]
    Moreover, $\mathcal{S}$ is homogeneous of degree $0$.
    \item $\mathcal{D}\subset \mathcal{X}$ is compact with $\sup_{\mf{d}\in\mathcal{D}}\|\mf{d}\|=1$.
    \item Let $\partial U:\mathcal{X}\rightrightarrows\mathcal{X}$ be the subdifferential defined by $\partial U(\mf{e}_0) = \{\nabla U(\mf{e}_0)\}$ for all $\mf{e}_0\in\mathcal{X}$.  
    Assume that $\partial U$ and $\mathcal{S}$ are nonempty, bounded, closed, convex, and upper semi-continuous.\footnote{These conditions ensure existence of solutions to \eqref{eq:abstract:sys} in the sense of Filippov \citep{filippov}, and are trivially satisfied in the examples considered in this work.}
\end{enumerate}
\end{assum}

\begin{rem}
\label{rem:assump:super:twist}
The system \eqref{eq:abstract:sys} is referred to here as the \emph{abstract super-twisting} system, since it reduces to the classical scalar super-twisting dynamics under the choice 
$N=1$, $\gamma=0$, $U(e_0)=(2/3)|e_0|^{3/2}$, $\mathcal{S}(e_0)=\text{sign}(e_0)$, $\mathcal{X}=\mathbb{R}$, and $\mathcal{D}=[-1,1]$.  
These selections satisfy Assumption~\ref{as:basic} with $c_\mathcal{S}=1$, yielding
\begin{equation}
\label{eq:st}
\begin{aligned}
\dot{e}_0(t) &= -k_0 L^{1/2}\sgn{e_0(t)}{1/2} + e_1(t), \\[0.4em]
\dot{e}_1(t) &\in -k_1 L \text{sign}(e_0(t)) + [-L,L],
\end{aligned}
\end{equation}
which coincides with the super-twisting system~\citep{Levant1998}.  
{This example makes it evident why the $3/2$ homogeneity degree is needed, so that its gradient generates the $1/2$ fractional power required by the super-twisting. Moreover, as we will discuss in detail in Section \ref{sec:proof}, Assumption \ref{as:basic} is not restrictive in the sense that it also covers the error system of interest in \eqref{eq:error:system}, by picking $\nabla U({\mf{e}}_0)=\mf{D}\sgn{\mf{D}^{\top}{\mf{e}}_0}{1/2}$ and $\mathcal{S}(\mf{e}_0) = \mf{D}\sgn{\mf{D}^{\top}{\mf{e}}_0}{0}$}. 
\end{rem}

\begin{rem}
\label{rem:mgsta}
{There are other generalizations of the super-twisting algorithm in the literature, such as those proposed in \citep{ moreno2021multivariable, lopez2019generalised}. These frameworks cannot be applied to the distributed differentiator associated with the error system \eqref{eq:error:system}. Indeed, to apply those generalizations, one would need the existence of a constant $c>0$ such that
\begin{equation}
\label{eq:mgsta}
\mf{D}\,\sgn{\mf{D}^\top \mf{e}_0}{0}
=
c\,\mf{J}(\mf{e}_0)\,\mf{D}\,\sgn{\mf{D}^\top \mf{e}_0}{1/2},
\end{equation}
for all $\mf{e}_0\in\mathbb{R}^N\setminus\text{span}(\mathds{1})$, where $\mf{J}(\mf{e}_0)$ denotes the Jacobian of $\mf{D}\,\sgn{\mf{D}^\top \mf{e}_0}{1/2}$, given by
$$
\mf{J}(\mf{e}_0)
=
\tfrac12\,\mf{D}\,\mathrm{diag}\!\big(|\mf{D}^\top \mf{e}_0|^{-1/2}\big)\,\mf{D}^\top
$$
where $\mathrm{diag}(|\mf{D}^\top \mf{e}_0|^{-1/2})$ denotes the diagonal matrix whose diagonal entries are the inverse square roots of the absolute values of the elements of the vector $\mf{D}^\top \mf{e}_0$.
For \eqref{eq:mgsta} to hold, it is necessary that $\mf{D}^\top\mf{D}=\alpha \mf{I}$ for some $\alpha\in\mathbb{R}$, which fails in any undirected connected graph except the trivial case of $N=2$. }
\end{rem}

We are now ready to state the main result regarding the abstract super-twisting.

\begin{thm}
\label{th:abstract:st}
Under Assumption~\ref{as:basic}, if the gains $k_0,k_1>0$ and $\gamma\ge 0$ satisfy
\begin{equation}
\label{eq:gains}
k_0 > \sup_{\|\mf{x}\|_\mf{r}=1, \Gamma(\mf{x})>0} 
\frac{\Pi(\mf{x})}{\Gamma(\mf{x})}, 
\qquad 
k_1 > \frac{1}{c_\mathcal{S}},
\end{equation}
with $\Gamma$ and $\Pi$ defined in \eqref{eq:functions}, then the origin of \eqref{eq:abstract:sys} is finite-time stable for all initial conditions $\mf{e}_0(0),\mf{e}_1(0)\in\mathcal{X}$.
\end{thm}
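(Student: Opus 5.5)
The plan is to build a Lyapunov function for the abstract system \eqref{eq:abstract:sys} that mimics the known quadratic-like Lyapunov functions for the scalar super-twisting \eqref{eq:st}, but is expressed only through $U$, $\mathcal{S}$ and the inner product, so that it is blind to the graph structure. We work with the weights $\mf{r}=(2,1)$ for $(\mf{e}_0,\mf{e}_1)$. Under these weights, $U(\mf{e}_0)$ has degree $3$ (in units where $\mf{e}_0$ has degree $2$), $\nabla U$ has degree $1$, and $\mathcal{S}$ has degree $0$. The candidate is
\[
V(\mf{e}_0,\mf{e}_1)= \alpha\, U(\mf{e}_0)^{4/3} - \beta\,\langle \nabla U(\mf{e}_0), \mf{e}_1\rangle\, U(\mf{e}_0)^{?}\ \text{(a cross term)} + \tfrac{\delta}{4}\|\mf{e}_1\|^4,
\]
or, more simply in the spirit of the structural approach, $V = k_1 L\, \phi(\mf{e}_0) + \tfrac12\|\mf{e}_1\|^2 - \theta\langle \mf{e}_1, \text{something of degree }1\rangle$, where $\phi$ is a degree-$2$ positive definite function built from $U$ (e.g.\ $U^{2/3}$). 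Here $V$ is chosen to be homogeneous of a fixed degree $m$.

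First I would verify positive definiteness and homogeneity of $V$. Strict convexity and positive definiteness of $U$ give $\langle\nabla U(\mf{e}_0),\mf{e}_0\rangle = \tfrac32 U(\mf{e}_0)>0$ by Euler's relation. The cross term is then dominated using Young's inequality on the compact unit sphere $\|\mf{x}\|_\mf{r}=1$.

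Second I would compute $\dot V$ along solutions, using set-valued derivatives since $\mathcal{S}$ and $\mathcal{D}$ are set-valued. The $-\gamma$ terms contribute $-\gamma(\text{degree-weighted})V\le 0$ by Euler's theorem, so they can be dropped or kept as extra decay. The disturbance terms are bounded using $\sup\|\mf{d}\|=1$. The term $\langle \mf{s},\cdot\rangle$ with $\mf{s}\in\mathcal{S}(\mf{e}_0)$ is handled with Assumption~\ref{as:basic}(iii), which is exactly where $k_1 c_\mathcal{S}>1$ enters: it makes the $\mathcal{S}$-driven decay beat the disturbance $L\mathcal{D}$. After scaling out $L$ by homogeneity (set $L=1$ via the dilation), I would collect $\dot V \le -k_0\,\Gamma(\mf{x}) + \Pi(\mf{x})$ on the unit sphere. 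Here $\Gamma$ gathers the terms multiplied by $k_0$ (those coming from $\nabla U$, which are nonnegative by convexity/monotonicity of $\nabla U$), and $\Pi$ gathers the rest; this is how I expect the functions in \eqref{eq:functions} to arise.

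Third I would close the argument. The gain condition gives $\dot V<0$ wherever $\Gamma>0$. Where $\Gamma(\mf{x})=0$, which I expect to be the set $\mf{e}_0=0$, I need $\Pi<0$ or a separate argument: at $\mf{e}_0=0$, $\mf{e}_1\neq 0$, the state immediately leaves that set, and the $k_1$ condition should make $\Pi$ negative there. Then $\dot V<0$ holds on the whole compact unit sphere, and by homogeneity $\dot V\le -c V^{(m-1)/m}$, giving finite-time stability globally. The main obstacle is this last point, namely controlling $\dot V$ on $\{\Gamma=0\}$ and ensuring the supremum in \eqref{eq:gains} is finite near it. I would first try to show that $\Pi(\mf{x})\le -\epsilon$ in a neighborhood of $\{\Gamma=0\}$ using $k_1c_\mathcal{S}>1$ and continuity. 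If that fails, I would fall back on an invariance/nonsmooth LaSalle argument for homogeneous inclusions, which yields asymptotic stability and hence finite-time stability due to negative homogeneity degree.
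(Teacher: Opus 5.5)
Your weights $\mf{r}=(2,1)$ and the closing homogeneity step match the paper. However, there is a genuine gap: you never fix the Lyapunov function, and the construction that makes the argument work is missing. Condition \eqref{eq:gains} refers to the specific $\Gamma,\Pi$ of \eqref{eq:functions}. These come from $V=U(\mf{x}_0)+(1+\beta)U^*(\mf{x}_1)-\langle\mf{x}_0,\mf{x}_1\rangle$ with $\beta\ge 7$, where $U^*$ is the convex conjugate of $U$, written in the scaled variables $\mf{x}_0=\mf{e}_0/L$, $\mf{x}_1=\mf{e}_1/(k_0L)$. The point of this choice is that $\nabla_{\mf{x}_0}V=\nabla U(\mf{x}_0)-\mf{x}_1$, while $\dot{\mf{x}}_0+\gamma\mf{x}_0=-\tilde k_0(\nabla U(\mf{x}_0)-\mf{x}_1)$. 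Dividing $\mf{e}_1$ by $k_0$ is what puts the coupling term inside the $k_0$-bracket. Hence the $\mf{x}_0$-equation contributes exactly $-\tilde k_0\Gamma$ with $\Gamma=\|\nabla U(\mf{x}_0)-\mf{x}_1\|^2$, and $\mathcal{S}$, $\mathcal{D}$ are paired with $\nabla_{\mf{x}_1}V=(1+\beta)\nabla U^*(\mf{x}_1)-\mf{x}_0$ to form $\Pi$.

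Your candidates lack this structure, for two reasons.
\begin{itemize}
\item A cross term $\langle\nabla U(\mf{e}_0),\mf{e}_1\rangle$ requires differentiating $\nabla U$ along trajectories, i.e.\ second derivatives of $U$. Assumption~\ref{as:basic} does not provide these. In the graph case they are unbounded and fail the relation $\mf{D}\sgn{\mf{D}^\top\mf{e}_0}{0}=c\,\mf{J}(\mf{e}_0)\nabla U(\mf{e}_0)$ that quadratic-form super-twisting proofs rely on (Remark~\ref{rem:mgsta}).
\item More generally, take a cross term $-\theta\langle\mf{e}_1,g(\mf{e}_0)\rangle$ in your unscaled coordinates, with $J_g$ the Jacobian of $g$. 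The $k_0$-terms of $\dot V$ then contain $\theta k_0\sqrt{L}\langle\mf{e}_1,J_g(\mf{e}_0)\nabla U(\mf{e}_0)\rangle$, which is linear in $\mf{e}_1$ and takes both signs. So they cannot be collected into a nonnegative $\Gamma$ as you assume.
\end{itemize}
The conjugate construction avoids all of this: the cross term is bilinear, the nonlinearity sits in $U^*(\mf{x}_1)$, and only $\nabla U$ and $\nabla U^*=(\nabla U)^{-1}$ ever appear.

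This is also why your last step cannot be completed. You expect $\{\Gamma=0\}$ to be $\{\mf{e}_0=\mf{0}\}$ and hope that $k_1c_\mathcal{S}>1$ makes $\Pi$ negative there. But on $\{\mf{e}_0=\mf{0},\ \mf{e}_1\neq\mf{0}\}$, item iii) of Assumption~\ref{as:basic} gives only $\langle\mf{s},\mf{0}\rangle\ge 0$. So $\mathcal{S}$ supplies no decay there, the disturbance $L\mathcal{D}$ is unopposed, and no $k_1$ helps. Decay on that set must come from the $k_0$-term, so you need a $V$ whose $\Gamma$ vanishes only where the coercivity of $\mathcal{S}$ is active. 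Your remark that trajectories leave this set immediately does not give the uniform bound needed for Lemma~\ref{le:gain:ineq}. The LaSalle fallback presupposes $\dot V\le 0$ everywhere, which you have not shown, and would not produce \eqref{eq:gains} anyway.

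In the paper, $\{\Gamma=0\}$ is the manifold $\mf{x}_1=\nabla U(\mf{x}_0)$. It meets $\{\mf{x}_0=\mf{0}\}$ only at the origin, since $\nabla U(\mf{0})=\mf{0}$. On it $\mf{x}_0=\nabla U^*(\mf{x}_1)$, so $\nabla_{\mf{x}_1}V=\beta\mf{x}_0$, and items iii) and iv) give $\Pi\le-\beta\tilde k_1\|\mf{x}_0\|(c_\mathcal{S}-1/k_1)<0$. That is exactly where $k_1c_\mathcal{S}>1$ enters. The $\mf{r}$-homogeneity of degree $2$ of $\Gamma,\Pi$ and Lemma~\ref{le:gain:ineq} then give $\dot V\le-c\|\mf{x}\|_\mf{r}^2\le-c\underline{v}V^{2/3}$.

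Finally, the $\gamma$-terms are not disposed of by Euler's theorem. The field $-\gamma(\mf{x}_0,\mf{x}_1)$ is the unweighted radial field, not the one generated by $\mf{r}=(2,1)$, so it does not return a multiple of $V$. In the paper it yields $-\gamma(\langle\nabla U(\mf{x}_0),\mf{x}_0\rangle+(1+\beta)\langle\nabla U^*(\mf{x}_1),\mf{x}_1\rangle-2\langle\mf{x}_0,\mf{x}_1\rangle)$. Its sign requires items~\ref{prop:euler} and~\ref{prop:gamma} of Lemma~\ref{le:properties}, namely Fenchel--Young applied to $2\mf{x}_1$ together with $U^*(2\mf{x}_1)=8U^*(\mf{x}_1)$; this is where $\beta\ge 7$ comes from.
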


{The main outcome of the previous result is that it will allow us to analyze stability of the error system \eqref{eq:error:system} for the distributed differentiator. The rest of this section is devoted to the proof of Theorem~\ref{th:abstract:st}.}

\subsection{Invariance and Lyapunov function candidate}
\label{sec:invariance}
Throughout this section, time-dependence will be omitted to emphasize algebraic relations in the Lyapunov analysis. 

{The proof strategy is the following. First, in Lemma \ref{lem:change} we use the change of variables
\begin{equation}
\label{eq:variables:x}
\mf{x}_0 = \frac{\mf{e}_0}{L}, 
\qquad 
\mf{x}_1 = \frac{\mf{e}_1}{k_0 L},
\end{equation}
to rewrite the abstract super-twisting dynamics in a form that is independent of the constant $L$, making explicit that variations in $L$ correspond to a rescaling of the system gains. Next, Lemma \ref{lem:invariance} establishes that the vector space $\mathcal{X}$ is invariant under the closed loop dynamics. This property is essential, since several subsequent arguments and conclusions are only valid when the trajectories evolve within $\mathcal{X}$. The core of the analysis is based on the \emph{Lyapunov function candidate}
\begin{equation}
\label{eq:lyapunov}
V(\mf{x}_0, \mf{x}_1)
\;=\;
U(\mf{x}_0) + (1+\beta) U^*(\mf{x}_1) - \langle \mf{x}_0, \mf{x}_1 \rangle,
\end{equation}
with $\beta>0$, where
\begin{equation}
\label{eq:dual}
U^*(\mf{x}_1)
\;=\;
\sup_{\mf{x}_0 \in \mathcal{X}}
\big\{ \langle \mf{x}_0, \mf{x}_1 \rangle - U(\mf{x}_0) \big\}
\end{equation}
denotes the convex conjugate of $U$ \cite[Section 3.3.1]{boyd_convex}. Lemma \ref{le:gamma} shows that this candidate is indeed a valid Lyapunov function for the system. Finally, Section \ref{sec:lyap:convergence} shows that the proposed Lyapunov function complies with a differential inequality implying finite time convergence.}

{Following this strategy, we begin by introducing a change of variables to rewrite \eqref{eq:abstract:sys} in a more convenient form.}

\begin{lem}
\label{lem:change}
Let Assumption~\ref{as:basic} hold.  
With the change of variables \eqref{eq:variables:x}, system~\eqref{eq:abstract:sys} is equivalently written as
\begin{equation}
\label{eq:abstract:sys:2}
\begin{aligned}
\dot{\mf{x}}_0 &= -\gamma \mf{x}_0 - \tilde{k}_0\big(\nabla U(\mf{x}_0) - \mf{x}_1\big), \\[0.4em]
\dot{\mf{x}}_1 &\in -\gamma \mf{x}_1 - \tilde{k}_1\left(\mathcal{S}(\mf{x}_0) + \frac{1}{k_1}\mathcal{D}\right),
\end{aligned}
\end{equation}
where $
\tilde{k}_0 = k_0, 
\tilde{k}_1 = {k_1}/{k_0}.$
\end{lem}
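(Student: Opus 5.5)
The plan is a direct substitution of the scalings \eqref{eq:variables:x} into \eqref{eq:abstract:sys}, using homogeneity to pull the factors of $L$ out of the nonlinearities. Equivalence will follow because the map $(\mf{e}_0,\mf{e}_1)\mapsto(\mf{x}_0,\mf{x}_1)$ is a linear bijection of $\mathcal{X}\times\mathcal{X}$ for $L>0$ and $k_0>0$. Under such a map, Filippov solutions of one inclusion correspond exactly to Filippov solutions of the other. If $L=0$ is allowed, I would treat it separately, or note that $L$ may be replaced by any positive upper bound in Assumption~\ref{as:bound}.

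The first step is a scaling identity for $\nabla U$. By Assumption~\ref{as:basic}(ii), $U(\lambda\mf{x})=\lambda^{3/2}U(\mf{x})$ for $\lambda>0$. Differentiating both sides with respect to $\mf{x}$ gives $\lambda\nabla U(\lambda\mf{x})=\lambda^{3/2}\nabla U(\mf{x})$, hence
\begin{equation}
\nabla U(\lambda\mf{x})=\lambda^{1/2}\nabla U(\mf{x}).
\end{equation}
Degree-$0$ homogeneity of $\mathcal{S}$ gives $\mathcal{S}(\lambda\mf{x})=\mathcal{S}(\mf{x})$. I would state these with the plain dilation $\lambda\mf{x}$; this is consistent with the Appendix's definition as used in the scalar super-twisting example of Remark~\ref{rem:assump:super:twist}.

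Next I will compute $\dot{\mf{x}}_0=\dot{\mf{e}}_0/L$. Substituting $\mf{e}_0=L\mf{x}_0$ and $\mf{e}_1=k_0L\mf{x}_1$, the term $-k_0L^{1/2}\nabla U(L\mf{x}_0)/L$ becomes $-k_0\nabla U(\mf{x}_0)$. The term $\mf{e}_1/L$ becomes $k_0\mf{x}_1$, and $-\gamma\mf{e}_0/L$ becomes $-\gamma\mf{x}_0$. Together these give the first line of \eqref{eq:abstract:sys:2} with $\tilde{k}_0=k_0$.

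For the second line I will compute $\dot{\mf{x}}_1=\dot{\mf{e}}_1/(k_0L)$. The term $-k_1L\mathcal{S}(L\mf{x}_0)/(k_0L)$ becomes $-(k_1/k_0)\mathcal{S}(\mf{x}_0)$. The term $-\gamma\mf{e}_1/(k_0L)$ becomes $-\gamma\mf{x}_1$. The disturbance set $L\mathcal{D}/(k_0L)$ equals $\tfrac{1}{k_0}\mathcal{D}=\tfrac{k_1}{k_0}\cdot\tfrac{1}{k_1}\mathcal{D}$. Factoring out $\tilde{k}_1=k_1/k_0$ yields the claimed form, with the Minkowski sum of set-valued terms preserved under positive scaling.

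The only delicate point, and the step I expect to need care, is the sign of the disturbance. The original inclusion carries $+L\mathcal{D}$, while \eqref{eq:abstract:sys:2} carries $-\tilde{k}_1\tfrac{1}{k_1}\mathcal{D}$. These agree provided $\mathcal{D}=-\mathcal{D}$. This holds in every case of interest: $[-1,1]$ in the scalar case, and the unit ball in $\mathcal{X}$ for the error system \eqref{eq:error:system}. Otherwise one simply replaces $\mathcal{D}$ by $-\mathcal{D}$, which still satisfies Assumption~\ref{as:basic}(iv). I would add this remark so the identity is stated honestly.
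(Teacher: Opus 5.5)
Your proposal is correct and follows the paper's route: substitute \eqref{eq:variables:x} directly into \eqref{eq:abstract:sys}, then use the degree-$1/2$ homogeneity of $\nabla U$ and the degree-$0$ homogeneity of $\mathcal{S}$ to remove the factors of $L$. Your two caveats are legitimate and are used silently in the paper: its last equality turns $+\frac{1}{k_0}\mathcal{D}$ into $-\frac{1}{k_0}\mathcal{D}$, which requires $\mathcal{D}=-\mathcal{D}$ (true for the sets actually used, and harmless downstream since only $\sup_{\mathbf{d}\in\mathcal{D}}\|\mathbf{d}\|$ enters), and dividing by $L$ requires $L>0$.
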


\begin{pf}
We compute directly:
\[
\begin{aligned}
\dot{\mf{x}}_0 &= L^{-1}\dot{\mf{e}}_0 \\
&= L^{-1}\big(-k_0 L^{1/2}\nabla U(\mf{e}_0) + \mf{e}_1 - \gamma \mf{e}_0\big) \\
&= L^{-1}\big(-k_0 L^{1/2}\nabla U(L\mf{x}_0) + k_0L \mf{x}_1 - \gamma L \mf{x}_0\big) \\
&= -\gamma \mf{x}_0 - k_0\big(\nabla U(\mf{x}_0) - \mf{x}_1\big),
\end{aligned}
\]
since $\nabla U(\bullet)$ is homogeneous of degree $1/2$, so $\nabla U(L\mf{x}_0)=L^{1/2}\nabla U(\mf{x}_0)$.  
Similarly,
\[
\begin{aligned}
\dot{\mf{x}}_1 &= (k_0L)^{-1}\dot{\mf{e}}_1 \\
&\in (k_0L)^{-1}\big(-k_1L \mathcal{S}(\mf{e}_0) - \gamma \mf{e}_1 + L \mathcal{D}\big) \\
&= (k_0L)^{-1}\big(-k_1L \mathcal{S}(L\mf{x}_0) - (k_0L)\gamma \mf{x}_1 + L \mathcal{D}\big) \\
&= -\gamma \mf{x}_1 - \frac{k_1}{k_0}\left(\mathcal{S}(\mf{x}_0) + \frac{1}{k_1}\mathcal{D}\right),
\end{aligned}
\]
since $\mathcal{S}(\bullet)$ is homogeneous of degree $0$, so $\mathcal{S}(L\mf{x}_0)=\mathcal{S}(\mf{x}_0)$.  
\qed
\end{pf}

{Now that the system has been written in the form of \eqref{eq:abstract:sys:2}, the following lemma concludes that its trajectories evolve within the space $\mathcal{X}$.}

\begin{lem}
\label{lem:invariance}
Let Assumption~\ref{as:basic} hold.  
Then $\mathcal{X}$ is forward invariant for solutions  $\mf{x}_0(t),\mf{x}_1(t)$ of \eqref{eq:abstract:sys:2}.
\end{lem}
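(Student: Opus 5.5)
The plan is to show that, whenever the state lies in $\mathcal{X}\times\mathcal{X}$, every admissible velocity in \eqref{eq:abstract:sys:2} also lies in $\mathcal{X}\times\mathcal{X}$. Since $\mathcal{X}$ is a closed linear subspace of $\mathbb{R}^N$, this tangency property will give forward invariance. The natural tool is the orthogonal projection $\mf{Q}$ onto $\mathcal{X}^\perp$. We take any Filippov solution with $\mf{x}_0(0),\mf{x}_1(0)\in\mathcal{X}$ and show that $\mf{Q}\mf{x}_0(t)$ and $\mf{Q}\mf{x}_1(t)$ remain zero.

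\textbf{Step 1 (the right-hand side takes values in $\mathcal{X}$).} By Assumption~\ref{as:basic} v), $\nabla U(\mf{x}_0)\in\mathcal{X}$, since $\partial U$ maps into $\mathcal{X}$. By iii), $\mathcal{S}(\mf{x}_0)\subseteq\mathcal{X}$, and by iv), $\mathcal{D}\subset\mathcal{X}$. Because $\mathcal{X}$ is a vector space (item i), $-\gamma\mf{x}_\mu$ and all linear combinations of the terms above stay in $\mathcal{X}$ whenever $\mf{x}_0,\mf{x}_1\in\mathcal{X}$. The set-valued map in the $\mf{x}_1$ equation is a Minkowski sum of a subset of $\mathcal{X}$ and a multiple of $\mathcal{D}$, so it is also contained in $\mathcal{X}$. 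Its convexification is contained in $\mathcal{X}$ too, because $\mathcal{X}$ is convex. The Filippov regularization therefore does not leave $\mathcal{X}$ either. This requires $\mathcal{X}$ to be closed, which holds automatically since it is a subspace of $\mathbb{R}^N$.

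\textbf{Step 2 (projection argument).} The maps are defined only on $\mathcal{X}$, so the real content is that a solution, once it is understood as an absolutely continuous function in $\mathbb{R}^N$ satisfying the inclusion almost everywhere, cannot leave $\mathcal{X}$. For almost every $t$ at which $\mf{x}_\mu(t)\in\mathcal{X}$, Step 1 gives $\mf{Q}\dot{\mf{x}}_\mu(t)=0$. To handle the logical issue of assuming the conclusion, I would argue directly on the solution set. The system is only defined on $\mathcal{X}\times\mathcal{X}$, so any solution in the Filippov sense lives there by definition. It is cleaner, however, to extend the maps to $\mathbb{R}^N$ by composing with the orthogonal projection $\mf{I}-\mf{Q}$ onto $\mathcal{X}$, that is, to use $\nabla U((\mf{I}-\mf{Q})\mf{x}_0)$, $\mathcal{S}((\mf{I}-\mf{Q})\mf{x}_0)$, and so on. Then $\frac{d}{dt}\mf{Q}\mf{x}_\mu=-\gamma\mf{Q}\mf{x}_\mu$ holds almost everywhere for $\mu=0,1$. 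Hence $\mf{Q}\mf{x}_\mu(t)=e^{-\gamma t}\mf{Q}\mf{x}_\mu(0)=0$, and the extended solutions coincide with the original ones and remain in $\mathcal{X}$.

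\textbf{Main obstacle.} The only delicate point is the set-valued and Filippov nature of the $\mf{x}_1$ equation: one must make sure that the regularization, i.e.\ the closed convex hull over neighbourhoods, does not introduce components outside $\mathcal{X}$. This is handled by the closedness and convexity of the subspace $\mathcal{X}$ together with the projection trick. After that step, the linear ODE for $\mf{Q}\mf{x}_\mu$ closes the argument immediately.
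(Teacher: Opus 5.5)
Your proposal is correct and matches the paper's proof, which likewise observes that $\nabla U$, $\mathcal{S}$ and $\mathcal{D}$ take values in the subspace $\mathcal{X}$, so the right-hand side of \eqref{eq:abstract:sys:2} lies in $\mathcal{X}$ whenever the state does, and then concludes ``by integration''; your projection onto $\mathcal{X}^\perp$ simply makes that last step rigorous. One small slip: unless your extension also replaces $\mf{x}_1$ by $(\mf{I}-\mf{Q})\mf{x}_1$ in the first equation, you get $\frac{d}{dt}\mf{Q}\mf{x}_0=-\gamma\mf{Q}\mf{x}_0+\tilde{k}_0\mf{Q}\mf{x}_1$ rather than a decoupled equation, but this triangular linear system with zero initial data still forces $\mf{Q}\mf{x}_0\equiv\mf{0}$ and $\mf{Q}\mf{x}_1\equiv\mf{0}$.
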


\begin{pf}
Observe that 
\[
\tilde{k}_1\Big(\mathcal{S}(\mf{x}_0) + \frac{1}{k_1}\mathcal{D}\Big)\subseteq \mathcal{X},
\]
since $\mathcal{X}$ is closed under addition. Thus, the right-hand side of \eqref{eq:abstract:sys:2} belongs to $\mathcal{X}$ whenever the state does.  
Hence, starting from $\mf{x}_0(0),\mf{x}_1(0)\in\mathcal{X}$, invariance follows directly by integration.  
\qed
\end{pf}

{Next, we proceed to analyze the properties of the proposed Lyapunov function in \eqref{eq:lyapunov}.}

\begin{lem}
\label{le:gamma}
Let Assumption~\ref{as:basic} hold.  
Then, the Lyapunov function $V$ in \eqref{eq:lyapunov} is positive definite and radially unbounded.
\end{lem}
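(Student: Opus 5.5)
The plan is to reduce positivity of $V$ to the Fenchel--Young inequality and then use homogeneity to turn positivity into a uniform lower bound on a compact "sphere", from which radial unboundedness follows.

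\textbf{Step 1: the conjugate $U^*$.} By Assumption~\ref{as:basic}, $U$ is strictly convex, positive definite and homogeneous of degree $3/2$ on $\mathcal{X}$. Its conjugate $U^*$ in \eqref{eq:dual} is then finite on $\mathcal{X}$, since $U$ grows superlinearly: $U(\mf{x}_0)\ge c\|\mf{x}_0\|^{3/2}$ with $c=\min_{\|\mf{y}\|=1}U(\mf{y})>0$. It is convex and homogeneous of degree $3$, the dual exponent, because $1/(3/2)+1/3=1$. The degree follows by substituting $\mf{x}_0=\lambda^2\mf{y}$ in the supremum defining $U^*(\lambda\mf{x}_1)$. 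Moreover, $U^*$ is positive definite: $U^*(\mf{x}_1)\ge \sup_t\{t\|\mf{x}_1\|^2-U(t\mf{x}_1)\}>0$ for $\mf{x}_1\neq0$, since the subtracted term is of order $t^{3/2}$ and so is eventually dominated by the linear term.

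\textbf{Step 2: positivity via Fenchel--Young.} The Fenchel--Young inequality gives $\langle\mf{x}_0,\mf{x}_1\rangle\le U(\mf{x}_0)+U^*(\mf{x}_1)$. Hence
\[
V(\mf{x}_0,\mf{x}_1)\ge \beta U^*(\mf{x}_1)\ge 0 .
\]
If $V=0$, then $\mf{x}_1=0$, and so $V=U(\mf{x}_0)=0$, which forces $\mf{x}_0=0$. Thus $V$ is positive definite on $\mathcal{X}\times\mathcal{X}$. Its continuity follows from convexity and finiteness of $U$ and $U^*$ on the finite-dimensional space $\mathcal{X}$.

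\textbf{Step 3: radial unboundedness via homogeneity.} Consider the dilation $(\mf{x}_0,\mf{x}_1)\mapsto(\lambda^2\mf{x}_0,\lambda\mf{x}_1)$, which corresponds to the weights $\mf{r}$ used in \eqref{eq:gains}. Under it,
\[
U\mapsto\lambda^3U,\qquad U^*\mapsto\lambda^3U^*,\qquad \langle\mf{x}_0,\mf{x}_1\rangle\mapsto\lambda^3\langle\mf{x}_0,\mf{x}_1\rangle,
\]
so $V$ is homogeneous of degree $3$. Let $m$ be the minimum of $V$ on the compact homogeneous sphere $\{\|\mf{x}_0\|^{1/2}+\|\mf{x}_1\|=1\}\cap\mathcal{X}^2$; by Step 2, $m>0$. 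Homogeneity then yields
\[
V(\mf{x}_0,\mf{x}_1)\ge m\big(\|\mf{x}_0\|^{1/2}+\|\mf{x}_1\|\big)^3,
\]
and the right-hand side tends to infinity as $\|(\mf{x}_0,\mf{x}_1)\|\to\infty$.

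\textbf{Main obstacle.} The delicate point is Step 1, specifically showing that $U^*$ is finite, continuous and positive definite when restricted to $\mathcal{X}$. This requires taking the supremum only over $\mathcal{X}$ and using the superlinear growth of $U$, which comes from positive definiteness combined with homogeneity of degree $3/2>1$. Once that is settled, the rest is the standard Fenchel--Young argument and the compactness argument of Step 3.
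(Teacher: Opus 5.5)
Your proof is correct and follows the same route as the paper. Fenchel--Young gives $V\ge\beta U^*(\mathbf{x}_1)\ge 0$, so $V=0$ forces $\mathbf{x}_1=\mathbf{0}$ and then $\mathbf{x}_0=\mathbf{0}$, and degree-$3$ homogeneity under $(\mathbf{x}_0,\mathbf{x}_1)\mapsto(\lambda^2\mathbf{x}_0,\lambda\mathbf{x}_1)$ gives radial unboundedness; your Step~1 and your compact-sphere bound $V\ge m(\|\mathbf{x}_0\|^{1/2}+\|\mathbf{x}_1\|)^3$ supply details the paper only cites or leaves implicit, namely finiteness, continuity and strict positivity of $U^*$ on $\mathcal{X}$.

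One small wording slip in Step~1: $t\|\mathbf{x}_1\|^2-t^{3/2}U(\mathbf{x}_1)$ is positive for small $t>0$, where the linear term dominates, not for large $t$, where the $t^{3/2}$ term wins. The conclusion $U^*(\mathbf{x}_1)>0$ for $\mathbf{x}_1\neq\mathbf{0}$ is unaffected.
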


\begin{pf}
From \eqref{eq:lyapunov},
\[
V(\mf{x}_0,\mf{x}_1) \;=\; W(\mf{x}_0,\mf{x}_1) + \beta U^*(\mf{x}_1),
\]
with $W(\mf{x}_0,\mf{x}_1) := U(\mf{x}_0) + U^*(\mf{x}_1) - \langle \mf{x}_0,\mf{x}_1\rangle$.   In the following, we make use of the properties from Lemma~\ref{le:properties} in the Appendix. First, $W(\mf{x}_0,\mf{x}_1)\ge 0$ with equality if and only if $\mf{x}_1=\nabla U(\mf{x}_0)$. 
At the same time, $\beta U^*(\mf{x}_1)\ge 0$, and equality holds only when $\mf{x}_1=\mf{0}$.  
Thus $V(\mf{x}_0,\mf{x}_1)=0$ implies $\mf{x}_1=\mf{0}$ and $\mf{x}_0=\mf{0}$, so $V$ is positive definite. Moreover, since $U$ is homogeneous of degree $3/2$, its convex dual $U^*$ is homogeneous of degree $3$.  
Consequently $V$ is $\mf{r}$-homogeneous of degree $3$ with $\mf{r} = [2\mathds{1}^\top, \mathds{1}^\top]$, which implies radial unboundedness.  
\qed
\end{pf}

\subsection{Convergence of the abstract super-twisting}
\label{sec:lyap:convergence}
We prove Theorem~\ref{th:abstract:st} using the Lyapunov function \eqref{eq:lyapunov} with $\beta\geq 7$ as in Lemma \ref{le:properties}-\ref{prop:gamma} in the Appendix.  
Along the trajectories of \eqref{eq:abstract:sys:2}, the derivative of $V$ is
\[
\begin{aligned}
&\dot{V}=\\ 
&\langle \nabla U(\mf{x}_0), \dot{\mf{x}}_0\rangle 
   + (1+\beta)\langle \nabla U^*(\mf{x}_1), \dot{\mf{x}}_1\rangle 
   - \langle \mf{x}_1, \dot{\mf{x}}_0\rangle 
   - \langle \mf{x}_0, \dot{\mf{x}}_1\rangle \\
&= \langle \nabla U(\mf{x}_0) - \mf{x}_1, \dot{\mf{x}}_0\rangle
   + \langle (1+\beta)\nabla U^*(\mf{x}_1) - \mf{x}_0, \dot{\mf{x}}_1\rangle.
\end{aligned}
\]
Substituting the dynamics \eqref{eq:abstract:sys:2} gives
\[
\begin{aligned}
&\dot{V} \in 
\langle \nabla U(\mf{x}_0) - \mf{x}_1,\; -\gamma \mf{x}_0 - \tilde{k}_0(\nabla U(\mf{x}_0) - \mf{x}_1)\rangle \\[0.3em]
&+ \langle (1+\beta)\nabla U^*(\mf{x}_1) - \mf{x}_0,\; -\gamma \mf{x}_1 - \tilde{k}_1\big(\mathcal{S}(\mf{x}_0) + \mf{d}/k_1\big)\rangle,
\end{aligned}
\]
for some $\mf{d}\in\mathcal{D}$.  
{Rearranging terms yields}
\[
\begin{aligned}
&\dot{V} \in 
\langle \nabla U(\mf{x}_0) - \mf{x}_1, -\gamma \mf{x}_0\rangle 
+ \langle (1+\beta)\nabla U^*(\mf{x}_1) - \mf{x}_0, -\gamma \mf{x}_1\rangle \\[0.3em]
&- \tilde{k}_0 \|\nabla U(\mf{x}_0) - \mf{x}_1\|^2 \\[0.3em]
&+ \langle (1+\beta)\nabla U^*(\mf{x}_1) - \mf{x}_0,\; -\tilde{k}_1\big(\mathcal{S}(\mf{x}_0) + \mf{d}/k_1 \big)\rangle.
\end{aligned}
\]

First, consider the $\gamma$-terms:
\[
\begin{aligned}
&\langle \nabla U(\mf{x}_0) - \mf{x}_1, -\gamma \mf{x}_0\rangle 
+ \langle (1+\beta)\nabla U^*(\mf{x}_1) - \mf{x}_0, -\gamma \mf{x}_1\rangle  \\
&= -\gamma \Big( \langle \nabla U(\mf{x}_0), \mf{x}_0\rangle 
   + (1+\beta)\langle \nabla U^*(\mf{x}_1), \mf{x}_1\rangle 
   - 2\langle \mf{x}_0, \mf{x}_1\rangle \Big).
\end{aligned}
\]
By Lemma~\ref{le:properties}, items~\ref{prop:euler} and~\ref{prop:gamma} in the Appendix,
\[
\langle \nabla U(\mf{x}_0), \mf{x}_0\rangle \;\ge\; U(\mf{x}_0), 
\qquad 
\langle \nabla U^*(\mf{x}_1), \mf{x}_1\rangle \;\ge\; U^*(\mf{x}_1),
\]
so
\[
\begin{aligned}
&\langle \nabla U(\mf{x}_0) - \mf{x}_1, -\gamma \mf{x}_0\rangle 
+ \langle (1+\beta)\nabla U^*(\mf{x}_1) - \mf{x}_0, -\gamma \mf{x}_1\rangle \\
&\le -\gamma\big(U(\mf{x}_0) + (1+\beta)U^*(\mf{x}_1) - 2\langle \mf{x}_0,\mf{x}_1\rangle\big) \;\le\; 0.
\end{aligned}
\]

Hence, the derivative of $V$ satisfies
\begin{equation}
\label{eq:vdot}
\dot{V} \;\le\; -\tilde{k}_0\Gamma(\mf{x}) + \Pi(\mf{x}),
\end{equation}
where $\mf{x} = [\mf{x}_0,\mf{x}_1]^\top$ and
\begin{equation}
\label{eq:functions}
\begin{aligned}
\Gamma(\mf{x}) &= \|\nabla U(\mf{x}_0) - \mf{x}_1\|^2, \\[0.3em]
\Pi(\mf{x}) &= \sup_{\mf{d}\in\mathcal{D}}
\Big\langle (1+\beta)\nabla U^*(\mf{x}_1) - \mf{x}_0,\;
-\tilde{k}_1\Big(\mathcal{S}(\mf{x}_0)+\frac{1}{k_1}\mf{d}\Big)\Big\rangle.
\end{aligned}
\end{equation}

Next, we verify that $\Pi(\mf{x})<0$ whenever $\Gamma(\mf{x})=0$, i.e., for 
\[
\mathcal{X}_U = \{(\mf{x}_0,\mf{x}_1)\in\mathcal{X}\times\mathcal{X} : \nabla U(\mf{x}_0) = \mf{x}_1\}.
\]
{For any $\mf{x}\in\mathcal{X}_U$, we have $\mf{x}_0 = \nabla U^*(\mf{x}_1)$ since $(\nabla U)^{-1}=\nabla U^*$ \citep[Theorem 26.5]{rockafellar1970}. }
Hence
\[
(1+\beta)\nabla U^*(\mf{x}_1) - \mf{x}_0 = \beta \mf{x}_0, \qquad \mf{x}\in\mathcal{X}_U.
\]
Therefore,
\[
\begin{aligned}
\Pi(\mf{x})\big|_{\mf{x}\in\mathcal{X}_U}
&= -\tilde{k}_1 \beta \langle \mf{x}_0, \mathcal{S}(\mf{x}_0)\rangle 
   + \frac{\beta\tilde{k}_1}{k_1}\sup_{\mf{d}\in\mathcal{D}}\langle \mf{x}_0,\mf{d}\rangle \\[0.4em]
&\le -\tilde{k}_1 \beta c_\mathcal{S}\|\mf{x}_0\|
   + \frac{\beta\tilde{k}_1}{k_1}\|\mf{x}_0\| \sup_{\mf{d}\in\mathcal{D}}\|\mf{d}\| \\[0.4em]
&= -\beta \tilde{k}_1 \|\mf{x}_0\|\Big(c_\mathcal{S} - \frac{1}{k_1}\Big) \;<\; 0, \qquad \forall\mf{x}_0\ne \mf{0},
\end{aligned}
\]
where we have used the properties iii) and iv) in Assumption~\ref{as:basic}. The strict inequality follows from the condition $k_1>1/c_\mathcal{S}$ in \eqref{eq:gains}. 

Moreover, since $U$ and $U^*$ are homogeneous of degrees $3/2$ and $3$ by Assumption~\ref{as:basic} and Lemma~\ref{le:properties}–\ref{prop:hom} in the Appendix, their gradients are homogeneous of degrees $1/2$ and $2$, respectively.  
It follows that $\Gamma$ and $\Pi$ are $\mf{r}$-homogeneous of degree $2$ with $\mf{r} = [2\mathds{1}^\top, \mathds{1}^\top]$.  
Hence, by Lemma~\ref{le:gain:ineq} in the Appendix, there exists $c>0$ such that
\begin{equation}
\label{eq:c:const}
-\tilde{k}_0\Gamma(\mf{x}) + \Pi(\mf{x}) \;\le\; -c\|\mf{x}\|_{\mf{r}}^2,
\end{equation}
Therefore,
\[
\dot{V} \;\le\; -c\|\mf{x}\|_{\mf{r}}^2,
\]
which already guarantees asymptotic stability for all initial conditions $(\mf{x}_0(0),\mf{x}_1(0))\in\mathcal{X}\times\mathcal{X}$.  

Let $V_2(\mf{x})=\|\mf{x}\|_{\mf{r}}^2$ and $V_1(\mf{x})=V(\mf{x})$.  
By Lemma~\ref{le:comparison} in the Appendix,
\begin{equation}
\label{eq:v:const}
\|\mf{x}\|_{\mf{r}}^2 \;\ge\; \underline{v} V(\mf{x})^{2/3},
\end{equation}
for some $\underline{v}>0$.  
Hence,
\begin{equation}
\label{eq:lyap:ineq}
\dot{V} \;\le\; -c\underline{v}V^{2/3}.
\end{equation}

Applying the comparison lemma yields
\[
-\int_{V(\mf{x}(0))}^{0} V^{-2/3}\mathrm{d}V 
\;\ge\; c\underline{v}\int_0^T \mathrm{d}t,
\]
where $T$ denotes the settling time for $V$ to reach zero.  
Equivalently,
\[
T \;\le\; \frac{1}{3c\underline{v}} V(\mf{x}(0))^{1/3}.
\]
Thus, for any initial conditions $\mf{x}(0)\in\mathcal{X}\times \mathcal{X}$, all trajectories converge to the origin before a finite time depending on $V(\mf{x}(0))$. This completes the proof of Theorem \ref{th:abstract:st}.

\section{Proof of the Main Result}
\label{sec:proof}
{To prove the main result, we will relate the error system in \eqref{eq:error:system} with the abstract super-twisting in \eqref{eq:abstract:sys}.} Let
\begin{equation}
\label{eq:X:D}
\begin{aligned}
\mathcal{X} &= \{\mf{x}\in\mathbb{R}^{N} : \mf{x} = \mf{P}\mf{x}' \ \text{for some } \mf{x}'\in\mathbb{R}^{N}\}, \\
\mathcal{D} &= \{\mf{d}\in\mathcal{X} : \|\mf{d}\|\leq 1\}.
\end{aligned}
\end{equation}
Under Assumption~\ref{as:bound}, the disturbance $\mf{d}(t)$ from \eqref{eq:error:system} satisfies $\mf{d}(t)\in L\mathcal{D}$, thereby satisfying the conditions on $\mathcal{X},\mathcal{D}$ in Assumption~\ref{as:basic}. The error dynamics \eqref{eq:error:system} can thus be expressed as the differential inclusion
\begin{equation}
\label{eq:redcho:inclusion}
\begin{aligned}
\dot{\mf{e}}_{0}(t) &= -k_{0}\sqrt{L}\,\mf{D}\,\sgn{\mf{D}^\top \mf{e}_{0}(t)}{\frac{1}{2}}
    + \mf{e}_{1}(t) - \gamma \mf{e}_{0}(t), \\[1em]
\dot{\mf{e}}_{1}(t) &\in -k_{1}L\,\mf{D}\,\sgn{\mf{D}^\top \mf{e}_{0}(t)}{0}
    - \gamma \mf{e}_{1}(t) + L\mathcal{D}.
\end{aligned}
\end{equation}

Define $\mathcal{S}(\mf{e}_0) := \mf{D}\,\sgn{\mf{D}^\top \mf{e}_{0}}{0}$. Then,
$$ 
\begin{aligned} 
&\langle \mf{e}_0, \mathcal{S}(\mf{e}_0)\rangle = \langle \mf{e}_0, \mf{D}\sgn{\mf{D}^\top \mf{e}_{0}(t)}{0}\rangle \\ 
&=\langle \mf{D}^\top\mf{e}_0, \sgn{\mf{D}^\top \mf{e}_{0}}{0} \rangle = \langle \mf{v}, \sgn{\mf{v}}{0}\rangle = \sum_{\ell=1}^{|\mathcal{E}|} v_\ell \sgn{v_\ell}{0} \\ 
&= \sum_{\ell=1}^{|\mathcal{E}|} |v_\ell| \geq \|\mf{v}\|\geq \|\mf{D}^\top\mf{e}_0\|=\sqrt{\mf{e}_0^\top \mf{D}\mf{D}^\top\mf{e}_0}\\ 
&\geq \sqrt{\lambda_\mathcal{G}}\|\mf{e}_0\|
\end{aligned} 
$$
with $\mf{v} = \mf{D}^\top \mf{e}_0$ and $c_\mathcal{S} = \sqrt{\lambda_\mathcal{G}}$, where $\lambda_{\mathcal{G}}$ denotes the algebraic connectivity of $\mathcal{G}$, i.e., the smallest nonzero eigenvalue of the Laplacian $\mf{L}=\mf{D}\mf{D}^\top$. Thus, $\mathcal{S}$ satisfies the requirements in Assumption~\ref{as:basic}.

Now separate $\mf{D} = [\mf{d}_1,\dots,\mf{d}_{|\mathcal{E}|}]$, with each column $\mf{d}_\ell$ corresponding to an edge, and define
$$
U(\mf{e}_0) := \frac{2}{3}\sum_{\ell=1}^{|\mathcal{E}|} \big|\mf{d}_\ell^\top \mf{e}_0\big|^{\frac{3}{2}}.
$$
The function $U$ is positive, and $U(\mf{e}_0)=0$ only if $\mf{e}_0\in\ker(\mf{D}^\top)$. Since $\mf{e}_0\in\mathcal{X}$, this implies $\mf{e}_0=0$, so $U$ is positive definite on $\mathcal{X}$. Moreover, $|\bullet|^{\frac{3}{2}}$ is strictly convex, and so are non-singular affine combinations over $\mathcal{X}$ and compositions thereof. Hence, $U$ is strictly convex.

The gradient of $U$ is
$$
\begin{aligned}
&\nabla U(\mf{e}_0) 
= \sum_{\ell=1}^{|\mathcal{E}|} \mf{d}_\ell \,|\mf{d}_\ell^\top \mf{e}_0|^{\frac{1}{2}}
     \,\text{sign}(\mf{d}_\ell^\top\mf{e}_0) \\[0.5em]
&= \begin{bmatrix} \mf{d}_1,\dots,\mf{d}_{|\mathcal{E}|}\end{bmatrix}
   \begin{bmatrix}
   \sgn{\mf{d}_1^\top\mf{e}_0}{\frac{1}{2}} \\[-0.3em]
   \vdots \\[-0.3em]
   \sgn{\mf{d}_{|\mathcal{E}|}^\top\mf{e}_0}{\frac{1}{2}}
   \end{bmatrix} = \mf{D}\,\sgn{\mf{D}^\top\mf{e}_0}{\frac{1}{2}}.
\end{aligned}
$$
Thus, $U$ satisfies all conditions in Assumption~\ref{as:basic}, and \eqref{eq:redcho:inclusion} matches the form of the abstract super-twisting system \eqref{eq:abstract:sys}. By Theorem~\ref{th:abstract:st}, the consensus errors converge to the origin in finite time for gains selected as in \eqref{eq:gains}. Convergence of the consensus errors implies asymptotic convergence of the consensus components to $\bar{s}(t)$ and $\dot{\bar{s}}(t)$, as shown in \cite{redcho}, which completes the proof.

\begin{rem}
\label{rem:gains}
The gain conditions in \eqref{eq:gains} can be evaluated systematically by casting them as constrained nonlinear optimization problems solvable with standard tools. Specifically, \eqref{eq:dual} is used to compute the dual function $U^*$ over $\mathcal{X}$, while the lower bound of $\tilde{k}_1$ in \eqref{eq:gains} is obtained over the homogeneous ball $\|\mf{x}\|_{\mf{r}}=1$. Since feasibility requires gains only to exceed these bounds, one can compute them for a family of graphs and then select global gains as their maximum, ensuring validity across the entire family.
\end{rem}

\section{Event-Triggered Protocols}
\label{sec:et}

Since the distributed differentiator is intended to be based on communicating information between agents, it makes sense to study implementations of it that make efficient use of the communication network. Event-triggered algorithms are based on the principle of replacing continuous information exchange by a sequence of transmission instants, determined by local triggering rules. {Between two successive events, each agent evolves its local dynamics using the most recently received information from its neighbors, and instantaneous information of its local state. }

{In this section, we aim to derive an event-triggered version of the distributed differentiator and study possible tradeoffs between accuracy and communication effort. For the sake of generality, we will consider  that each edge $(i,j)\in\mathcal{E}$ is equipped with a sequence of transmission instants $\{\tau^{ij}_k\}_{k=0}^{\infty}$, at which both agents $i$ and $j$ broadcast their current outputs. The dynamics at agent $i$ then become
\begin{equation}
\label{eq:trigger:system}
\begin{aligned}
\dot{{\upeta}}_{i,0}(t) &= k_{0}\sqrt{L} \sum_{j\in\mathcal{N}_i}\sgn{\hat{s}_{i,0}(\tau^{ij}_k) - \hat{s}_{j,0}(\tau^{ij}_k)}{1/2} \\ &
+ {\upeta}_{i,1}(t) - \gamma {\upeta}_{i,0}(t),\\[0.8em]
\dot{{\upeta}}_{i,1}(t) &= k_1 L \sum_{j\in\mathcal{N}_i}\sgn{\hat{s}_{i,0}(\tau^{ij}_k) - \hat{s}_{j,0}(\tau^{ij}_k)}{0} 
- \gamma {\upeta}_{i,1}(t),
\end{aligned}
\end{equation}
where $\tau^{ij}_k$ denotes the most recent transmission time associated with edge $(i,j)$.}

{Therefore, the time sequence design depends on a recursive trigger rule of the form:
\begin{equation}
\label{eq:trigger:general}
\begin{aligned}
&\tau^{ij}_{k+1} =\inf\Big\{\, t\geq \tau^{ij}_{k} : \\&
\max\big\{|\hat{s}_{i,0}(t)-\hat{s}_{i,0}(\tau^{ij}_{k})|,\;
|\hat{s}_{j,0}(t)-\hat{s}_{j,0}(\tau^{ij}_{k})|\big\}\;\geq\;\delta_{ij}(t) \,\Big\},
\end{aligned}
\end{equation}
with $\tau^{ij}_0=0$. Each agent monitors only its own output and broadcasts when the deviation exceeds $\delta_{ij}(t)$, which simultaneously updates both ends resulting in $\tau^{ij}_k=\tau^{ji}_k$.}

{We write $\delta_{ij}(t)$ as time-varying to cover a constant send-on-delta trigger, state dependent triggers and time varying dynamic triggers. In the rest of this section, we discuss the consequences of choosing $\delta_{ij}(t)$ in several of these forms.}

{First, we start with a design that ensures the exclusion of Zeno behavior, consisting of the accumulation of events in finite-time, and the exact asymptotic convergence of the distributed differentiator.}

\begin{prop}
\label{prop:trigger:asymp}
{Let \eqref{eq:trigger:general} with $\delta_{ij}(t)>0, \forall t\geq 0$,  $\lim_{t\to\infty} \delta_{ij}(t)=0$, $\delta_{ij}(t)$ bounded and attaining a strictly positive minimum for every compact time interval. 
Then, there is no Zeno behavior for the sequence $\{\tau^{ij}_k\}_{k=0}^\infty$ and all pairs of agents $(i,j)\in\mathcal{E}$ obtain
$$
\lim_{t\to\infty}|\hat{s}_{i,0}(t)-\hat{s}_{j,0}(t)| =0,\quad
\lim_{t\to\infty}|\hat{s}_{i,1}(t)-\hat{s}_{j,1}(t)| =0.
$$}
\end{prop}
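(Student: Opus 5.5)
Two things have to be shown: no Zeno behavior, and asymptotic consensus. For the first we bound the velocity of the shared outputs $\hat{s}_{i,0}$ on compact time intervals. For the second we view the event-triggered error dynamics as a perturbation of the abstract super-twisting system in \eqref{eq:abstract:sys}, where the perturbation is driven by the measurement mismatch $\hat{s}_{i,0}(\tau^{ij}_k)-\hat{s}_{i,0}(t)$.

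\textbf{Step 1: boundedness on compact intervals.} Write the implementation \eqref{eq:trigger:system} in the error coordinates \eqref{eq:variables:e}. The $\upeta_{i,1}$ dynamics are linear with a right-hand side that is bounded by $k_1L|\mathcal{N}_i|$, since the sign terms are bounded, so $\upeta_{i,1}$ grows at most linearly. The same holds for $\upeta_{i,0}$: on top of the linear terms it only receives bounded fractional-power terms, because the sampled differences $\hat{s}_{i,0}(\tau^{ij}_k)-\hat{s}_{j,0}(\tau^{ij}_k)$ enter through $\sgn{\cdot}{1/2}$. This needs the differences to be bounded, which I would get by Gronwall, noting that $|\sgn{a}{1/2}|\le 1+|a|$. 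Together with the twice differentiability of $s_i$, this gives a bound $M_T$ on $|\dot{\hat{s}}_{i,0}|=|\dot s_i-\dot{\upeta}_{i,0}|$ on every interval $[0,T]$.

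\textbf{Step 2: no Zeno.} By \eqref{eq:trigger:general}, consecutive events on edge $(i,j)$ satisfy $\delta_{ij}(\tau^{ij}_{k+1})\le M_T(\tau^{ij}_{k+1}-\tau^{ij}_k)$. On $[0,T]$, $\delta_{ij}$ attains a minimum $\underline{\delta}_T>0$ by hypothesis. Hence inter-event times are at least $\underline{\delta}_T/M_T$, so only finitely many events occur in $[0,T]$ and solutions extend to all $t\ge0$.

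\textbf{Step 3: asymptotic consensus.} Between events the trigger enforces $|\hat{s}_{i,0}(\tau^{ij}_k)-\hat{s}_{i,0}(t)|<\delta_{ij}(t)$. So each sampled edge difference equals $\mf{d}_\ell^\top\mf{e}_0+\epsilon_\ell(t)$ with $|\epsilon_\ell(t)|\le 2\delta_{ij}(t)\to0$. I would use the Lyapunov function $V$ from \eqref{eq:lyapunov}, with the gains \eqref{eq:gains}, along these perturbed dynamics. The perturbation in the $\mf{x}_0$ equation is then of size $O(\delta^{1/2})$, since $|\sgn{a+\epsilon}{1/2}-\sgn{a}{1/2}|\le 2|\epsilon|^{1/2}$.

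\textbf{Main obstacle: the discontinuous term.} The sign term in the $\mf{x}_1$ equation is the delicate part: $\text{sign}(a+\epsilon)$ differs from $\text{sign}(a)$ only when $|a|\le|\epsilon|$. I would exploit exactly that. On such edges, the corresponding contribution to $\Pi$ in \eqref{eq:functions}, compared with the nominal one, is bounded by $2\tilde{k}_1|\mf{d}_\ell^\top((1+\beta)\nabla U^*(\mf{x}_1)-\mf{x}_0)|$. The worry is that this term need not be small, because only the $\mf{x}_0$-part of the direction is controlled when $|a|\le|\epsilon|$. To handle it, I would treat the mismatch as an extra disturbance of size $O(1)$ that is active only in the region where $\|\mf{x}_0\|$ is small, and appeal to homogeneity. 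This should yield an estimate of the form $\dot V\le -c\|\mf{x}\|_\mf{r}^2+C\,\bar\delta(t)^{1/2}\|\mf{x}\|_\mf{r}^{3/2}+C'\bar\delta(t)$, where $\bar\delta(t)=\max_{ij}\delta_{ij}(t)$. Combining this with \eqref{eq:v:const} gives $\dot V\le -c\underline{v}V^{2/3}+\rho(t)$ whenever $V$ is large relative to $\bar\delta$, with $\rho(t)\to0$. That is an input-to-state stability type estimate. Since $\delta_{ij}$ is bounded, it gives ultimate boundedness, and since $\bar\delta(t)\to0$, a standard converging-input argument gives $V(t)\to0$, hence $\mf{e}_0,\mf{e}_1\to0$. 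Finally, $\hat{s}_{i,1}-\hat{s}_{j,1}$ is expressed through $\mf{e}_1$ and $\gamma\mf{e}_0$ via \eqref{eq:variables:hat:s}, so the second limit follows as well.
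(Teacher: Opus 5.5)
Your Steps 1 and 2 match the paper's argument: a bound on $|\dot{\hat s}_{i,0}|$ over each compact interval, combined with the positive minimum of $\delta_{ij}$ there. Your overall strategy for consensus also matches the paper's: treat the triggered protocol as REDCHO perturbed by edge errors of size at most $2\delta_{ij}(t)\to 0$, and use homogeneity. The gap is in how you handle the sign term in Step 3, which is the crux. The mismatch $\mathrm{sign}(w_\ell+\epsilon_\ell)\neq\mathrm{sign}(w_\ell)$, with $w_\ell=\mf{d}_\ell^\top\mf{x}_0$, occurs near the switching surface of a single edge, not only where $\|\mf{x}_0\|$ is small. For instance, on a path of three agents, $\mf{x}_0=[1,1,-2]^\top\in\mathcal{X}$ has a zero edge difference. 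More importantly, the mismatch contributes up to $2\tilde k_1(1+\beta)|\mf{d}_\ell^\top\nabla U^*(\mf{x}_1)|$ to $\dot V$. This term is $\mf{r}$-homogeneous of degree $2$, the same degree as the decrease $-c\|\mf{x}\|_{\mf{r}}^2$, and its coefficient does not shrink with $\bar\delta$. So no additive bound can turn it into $C\bar\delta^{1/2}\|\mf{x}\|_{\mf{r}}^{3/2}+C'\bar\delta$. Incidentally, the term from the $\sgn{\cdot}{1/2}$ perturbation scales as $\bar\delta^{1/2}\|\mf{x}\|_{\mf{r}}$, not with exponent $3/2$, since $\|\nabla U(\mf{x}_0)-\mf{x}_1\|$ has degree $1$.

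The missing idea is to absorb the flipped sign into the nominal decrease rather than add it. Require $-\tilde k_0\Gamma+\Pi\le -c\|\mf{x}\|_{\mf{r}}^2$ to hold for the set-valued extension of $\mathcal{S}$, i.e., with any value in $[-1,1]$ on edges where $\mf{d}_\ell^\top\mf{x}_0=0$. The paper's gain construction still achieves this: on the set $\Gamma=0$, such edges contribute nothing to $\langle\mf{s},\mf{x}_0\rangle$, so $\Pi<0$ there for every such selection. Upper semicontinuity and compactness of the sphere $\|\mf{x}\|_{\mf{r}}=1$ then show that edge perturbations below some $\eta>0$ keep $\dot V\le -c/2$ on the sphere. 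Homogeneity turns this into $\dot V\le -\tfrac{c}{2}\|\mf{x}\|_{\mf{r}}^2$ whenever $\|\mf{x}\|_{\mf{r}}^2\ge K\bar\delta$. The paper avoids this computation altogether. It freezes $\overline{\delta}=\sup_{t\ge T}\max_{(i,j)\in\mathcal{E}}\delta_{ij}(t)$ on $[T,\infty)$ and invokes the standard homogeneity-based practical-stability result for perturbed homogeneous systems. This gives ultimate bounds $c_0\overline{\delta}$ and $c_1\sqrt{\overline{\delta}}$ with constants independent of $\overline{\delta}$. It then lets $T\to\infty$, which also makes your converging-input step unnecessary.
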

\begin{pf}
{Note that, for every compact set $[0,T]$ with $T>0$, we have $\delta_{ij}(t)>\underline{\delta},\, \forall t\in[0,T]$ for a suitable $\underline{\delta}$ depending on $T$. Moreover, \eqref{eq:trigger:system} cannot have escapes in finite time as it is comprised of a linear system and piecewise continuous bounded terms over the compact interval $[0,T]$. Thus, on such interval, all $|\dot{\hat{s}}_{i,0}(t)|\leq B, \forall t\in[0,T]$, for a suitable bound $B$ depending on $T$. Therefore,
$$
|\hat{s}_{i,0}(\tau_{k+1}^{ij})-\hat{s}_{i,0}(\tau_{k}^{ij})|\leq \left|\int_{\tau_{k}^{ij}}^{\tau_{k+1}^{ij}}\dot{\hat{s}}_{i,0}(\tau)\text{d}\tau\right|\leq B(\tau_{k+1}^{ij}-\tau_k^{ij}).
$$
On the other hand,
$$
|\hat{s}_{i,0}(\tau_{k+1}^{ij})-\hat{s}_{i,0}(\tau_{k}^{ij})|\geq \delta_{ij}(\tau_{k+1}^{ij})\geq \underline{\delta},
$$
implying $\tau_{k+1}^{ij}-\tau_k^{ij}\geq \underline{\delta}/B>0$. Consequently, a strictly positive minimum inter-event time is ensured on every compact interval, and Zeno behavior in the form of accumulation of events in finite time cannot occur. Finally, note that for all $t\geq T,\ T>0$, we have $\delta_{ij}(t)\leq \overline{\delta}$, where the bound $\overline{\delta}$ depends on $T$. Hence, we can write
$$
\hat{s}_{i,0}(\tau^{ij}_k) - \hat{s}_{j,0}(\tau^{ij}_k) = \hat{s}_{i,0}(t) - \hat{s}_{j,0}(t) + \varepsilon_{ij}(t)
$$
with $|\varepsilon_{ij}(t)|\leq 2\overline{\delta},\, \forall t\geq T$. As a result, \eqref{eq:trigger:system} is a disturbed version of \eqref{eq:redcho}. Therefore, standard homogeneity arguments as in \cite{aldana2023} are used resulting in practical convergence, with accuracy scaling as $\limsup_{t\to\infty}|\hat{s}_{i,0}(t)-\hat{s}_{j,0}(t)| \leq c_0 \overline{\delta}$ and $ \limsup_{t\to\infty}|\hat{s}_{i,1}(t)-\hat{s}_{j,1}(t)| \leq c_1\sqrt{\overline{\delta}}$ with suitable constants $c_0,c_1>0$ that do not depend on $\overline{\delta}$. Since $\overline{\delta}\to 0$ as $T\to \infty$, then the differentiator consensus errors converge to 0 as well.\qed}
\end{pf}

{One possible choice for $\delta_{ij}(t)$ to comply with the conditions of Proposition \ref{prop:trigger:asymp} is
$$
\delta_{ij}(t) = \delta\exp(-q_{ij}(t-p_{ij}))
$$
with $\delta,q_{ij}>0,p_{ij}\geq 0$ being constants decided by the pair of agents $(i,j)$.}

{While Proposition \ref{prop:trigger:asymp} establishes asymptotic exactness without Zeno behavior, the next proposition shows that this remains compatible with the fact that triggering thresholds $\delta_{ij}(t)$ vanishing asymptotically necessarily lead to inter-event times that also vanish asymptotically.
}

\begin{prop}
\label{prop:trigger:impossibility}
{Let \eqref{eq:trigger:general} with $\delta_{ij}(t)$ satisfying the conditions stated in Proposition \ref{prop:trigger:asymp}. Assume that there exists $c>0,\, T_0>0$ such that, for all $t\geq 0$ and all $T\in(0,T_0]$, we have:
\begin{equation}
\label{eq:avgspeed}
\left|\frac{1}{T}\int_{t}^{t+T}\dot{\hat{s}}_{i,0}(\tau)\text{d}\tau\right|\geq c.
\end{equation}
Then, for every \( T \in (0, T_0] \) there exists \( K \ge 0 \) such that, for all \( k \ge K \), it follows that
\[
\tau_{k+1}^{ij} - \tau_k^{ij} \le T .
\]
Equivalently, there exists no strictly positive minimum inter-event time for the time interval $[0,\infty)$.}
\end{prop}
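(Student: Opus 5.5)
We argue by contradiction, using the displacement of $\hat{s}_{i,0}$ accumulated over any window of length $T$. Fix $T\in(0,T_0]$. The lower bound \eqref{eq:avgspeed} says that on every window $[t,t+T]$ the output $\hat{s}_{i,0}$ moves by at least $cT$:
\[
|\hat{s}_{i,0}(t+T)-\hat{s}_{i,0}(t)| = \left|\int_t^{t+T}\dot{\hat{s}}_{i,0}(\tau)\,\text{d}\tau\right| \ge cT .
\]
The trigger rule \eqref{eq:trigger:general} says that between consecutive events the deviation of agent $i$ from its last broadcast value stays below the threshold:
\[
|\hat{s}_{i,0}(t)-\hat{s}_{i,0}(\tau_k^{ij})| < \delta_{ij}(t), \qquad t\in[\tau_k^{ij},\tau_{k+1}^{ij}).
\]
By continuity of $\hat{s}_{i,0}$ (it is absolutely continuous, since the protocol has bounded right-hand sides on compact intervals, as in the proof of Proposition~\ref{prop:trigger:asymp}), this bound also holds with $\le$ at $t=\tau_{k+1}^{ij}$.

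The key step is to choose $t_1\ge 0$ such that $\delta_{ij}(t)<cT/2$ for all $t\ge t_1$. This is possible because $\delta_{ij}(t)\to 0$. Next we need infinitely many events, so that the index $K$ makes sense. Suppose instead the sequence were finite with last event $\tau_M$. Then for all $t\ge\tau_M$ we would have $|\hat{s}_{i,0}(t)-\hat{s}_{i,0}(\tau_M)|<\delta_{ij}(t)$. Applying the window bound on $[t,t+T]$ with $t\ge\max\{t_1,\tau_M\}$ gives
\[
cT \le |\hat{s}_{i,0}(t+T)-\hat{s}_{i,0}(\tau_M)| + |\hat{s}_{i,0}(t)-\hat{s}_{i,0}(\tau_M)| < cT ,
\]
which is a contradiction. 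Since Proposition~\ref{prop:trigger:asymp} excludes Zeno behavior, we also get $\tau_k^{ij}\to\infty$. We can therefore take $K$ to be the first index with $\tau_K^{ij}\ge t_1$.

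Now let $k\ge K$ and suppose $\tau_{k+1}^{ij}-\tau_k^{ij}>T$. Then the window $[\tau_k^{ij},\tau_k^{ij}+T]$ lies inside the inter-event interval, and $\tau_k^{ij}\ge t_1$. The trigger bound then gives
\[
|\hat{s}_{i,0}(\tau_k^{ij}+T)-\hat{s}_{i,0}(\tau_k^{ij})| < \delta_{ij}(\tau_k^{ij}+T) < cT/2 .
\]
This contradicts the window bound applied at $t=\tau_k^{ij}$, which gives a displacement of at least $cT$. Hence $\tau_{k+1}^{ij}-\tau_k^{ij}\le T$ for all $k\ge K$. Since $T\in(0,T_0]$ is arbitrary, $\inf_k(\tau_{k+1}^{ij}-\tau_k^{ij})=0$, so there is no strictly positive minimum inter-event time on $[0,\infty)$.

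The main subtlety is bookkeeping rather than estimation: showing that the event sequence is infinite, using Proposition~\ref{prop:trigger:asymp} to get $\tau_k^{ij}\to\infty$ so that $K$ is well defined, and handling the boundary point $\tau_{k+1}^{ij}$ through continuity. The trigger uses a max over both agents, so an event may also be caused by agent $j$. This does not matter here, because the argument only needs the upper bound that holds for agent $i$ between events.
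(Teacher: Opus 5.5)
Your proof is correct and follows the same idea as the paper: once $\delta_{ij}$ falls below a fixed fraction of $cT$, the guaranteed displacement $|\hat{s}_{i,0}(\tau_k^{ij}+T)-\hat{s}_{i,0}(\tau_k^{ij})|\ge cT$ forces an event within $T$ of $\tau_k^{ij}$. The paper states this directly as $\tau_k^{ij}+T$ belonging to the trigger set, whereas you argue by contradiction; your extra bookkeeping (the event sequence is infinite, and Proposition~\ref{prop:trigger:asymp} gives $\tau_k^{ij}\to\infty$, so $K$ exists) fills in a step the paper leaves implicit, and your continuity remark at $\tau_{k+1}^{ij}$ is unnecessary, since your contradiction hypothesis already places $\tau_k^{ij}+T$ strictly inside the inter-event interval.
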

\begin{pf}
{Define the set
    $$
    \begin{aligned}
    &\mathcal{T} := \Big\{\, t\geq\tau^{ij}_{k} : \\&
\max\big\{|\hat{s}_{i,0}(t)-\hat{s}_{i,0}(\tau^{ij}_{k})|,\;
|\hat{s}_{j,0}(t)-\hat{s}_{j,0}(\tau^{ij}_{k})|\big\}\;\geq\;\delta_{ij}(t) \,\Big\}.
\end{aligned}
    $$ Given $T_0>0$, choose any $T\in(0,T_0]$. Choose $T_{c}\geq 0$ such that $\delta_{ij}(t)\leq c T$ for all $t\geq T_c$. Choose $k\geq 0$ such that $\tau_{k}^{ij}\geq T_c$. Note that $\delta_{ij}(\tau_k^{ij}+T)\leq c T$. Hence, using \eqref{eq:avgspeed}, it follows that
    $$
    \left|\int_{t}^{t+T}\dot{\hat{s}}_{i,0}(\tau)\text{d}\tau\right|=|\hat{s}_{i,0}(t)-\hat{s}_{i,0}(\tau_k^{ij})|\geq cT\geq \delta_{ij}(t) 
    $$
    for $t=\tau_k^{ij}+T$.
Then, $\tau_{k}^{ij}+T\in\mathcal{T}$ and $\tau_{k+1}^{ij}=\inf \mathcal{T}$. Thus, $\tau_{k}^{ij}+T\geq \tau_{k+1}^{ij}$, equivalently $T\geq \tau_{k+1}^{ij} - \tau_k^{ij}$, completing the proof. \qed}
\end{pf}

{Note that the condition in \eqref{eq:avgspeed} implies the presence of an average persistent variation of the outputs \( \hat{s}_{i,0}(t) \), which may occur when the inputs \( s_i(t) \) are not asymptotically constant. It is precisely in this case that triggering must become increasingly frequent as \( \delta_{ij}(t) \) vanishes.
}

{Motivated by Proposition \ref{prop:trigger:impossibility}, and since we are interested in persistently varying references that may induce \eqref{eq:avgspeed}, we propose a different threshold design that ensures a strictly positive minimum inter-event time for all \( t \in [0,\infty) \) along each trajectory. The proposed trigger is state dependent, allowing communication to be reduced during the transient, at the cost of necessarily losing asymptotic exactness, while still providing a favorable tradeoff between performance and communication.}

\begin{thm}
\label{prop:trigger}
{Let the assumptions and conditions of Theorem~\ref{th:main} hold, and consider the event-triggered protocol in \eqref{eq:trigger:system} under the trigger rule 
\begin{equation}
\label{eq:trigger:state}
\begin{aligned}
&\tau^{ij}_{k+1} =\inf\Big\{\, t\geq \tau^{ij}_{k} : 
\max\big\{|\hat{s}_{i,0}(t)-\hat{s}_{i,0}(\tau^{ij}_{k})|,\\&
|\hat{s}_{j,0}(t)-\hat{s}_{j,0}(\tau^{ij}_{k})|\big\}\geq\;\delta +\sigma |\hat{s}_{i,0}(\tau^{ij}_{k})-\hat{s}_{j,0}(\tau^{ij}_{k})|\,\Big\},
\end{aligned}
\end{equation}
for some arbitrary $\delta>0$ and appropriate $\sigma\geq 0$. Then,
\begin{enumerate}
    \item For every initial condition, there exists a strictly positive minimum inter-event-time $\underline{\tau}>0$ such that $\tau_{k+1}^{ij}-\tau_{k}^{ij}\geq \underline{\tau}$, for all $(i,j)\in\mathcal{E}, k\geq 0$.
    \item There exist constants $c_0,c_1>0$, independent of $\delta$ such that
\begin{equation}
\begin{aligned}
\label{eq:accuracy}
&\limsup_{t\to\infty}|\hat{s}_{i,0}(t)-\hat{s}_{j,0}(t)| \;\leq\; c_0\,\delta,\\
&\limsup_{t\to\infty}|\hat{s}_{i,1}(t)-\hat{s}_{j,1}(t)| \;\leq\; c_1\,\sqrt{\delta},
\end{aligned}
\end{equation}
for all $(i,j)\in\mathcal{E}$.
\end{enumerate}}
\end{thm}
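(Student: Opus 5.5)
We treat the event-triggered protocol \eqref{eq:trigger:system} as a perturbation of \eqref{eq:redcho} and reuse the Lyapunov function $V$ of \eqref{eq:lyapunov}.

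\textbf{Step 1: perturbed error system.} For each edge we write the held difference as
$$\hat{s}_{i,0}(\tau^{ij}_k)-\hat{s}_{j,0}(\tau^{ij}_k)=\hat{s}_{i,0}(t)-\hat{s}_{j,0}(t)+\varepsilon_{ij}(t),$$
exactly as in the proof of Proposition~\ref{prop:trigger:asymp}. The trigger \eqref{eq:trigger:state} guarantees
$$|\varepsilon_{ij}(t)|\le 2\delta+2\sigma|\hat{s}_{i,0}(\tau^{ij}_k)-\hat{s}_{j,0}(\tau^{ij}_k)|.$$
Using the triangle inequality, for $2\sigma<1$ this yields
$$|\varepsilon_{ij}(t)|\le \tfrac{2}{1-2\sigma}\bigl(\delta+\sigma|\mf{d}_\ell^\top \mf{e}_0(t)|\bigr).$$
The error dynamics then become \eqref{eq:redcho:inclusion} with $\mf{D}^\top\mf{e}_0$ replaced by $\mf{D}^\top\mf{e}_0+\bm{\varepsilon}$ inside both $\sgn{\cdot}{1/2}$ and $\sgn{\cdot}{0}$. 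The perturbation consists of a multiplicative part of size $\sigma$ and an additive part of size $\delta$.

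\textbf{Step 2: accuracy bound (item 2).} We pass to the scaled variables \eqref{eq:variables:x}. The relative perturbation $\sigma$ enters homogeneously with the same weights $\mf{r}$ as $V$. We therefore choose $\sigma$ small enough that the homogeneous estimate \eqref{eq:c:const} survives with a smaller constant $c/2$. This is the meaning of ``appropriate $\sigma$'' in the statement, and it uses continuity of $\Gamma,\Pi$ on the compact sphere $\|\mf{x}\|_\mf{r}=1$ together with the strict margin in \eqref{eq:gains}.

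The $\sgn{\cdot}{0}$ term is discontinuous, so there the comparison cannot be made pointwise. We instead only use that sign agreement holds whenever $|\mf{d}_\ell^\top\mf{e}_0|>|\varepsilon_\ell|$. The set where this fails has $|\mf{d}_\ell^\top\mf{e}_0|\lesssim\delta$, and on it we treat the term as an additional bounded disturbance.

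The additive $\delta$ part is then handled by the standard homogeneous input-to-state argument, as in \cite{aldana2023}. Outside the homogeneous ball where $\|\mf{e}_0\|^{1/2}\sim\sqrt{\delta}$, the derivative satisfies $\dot V<0$. The weights $\mf{r}=[2\mathds{1}^\top,\mathds{1}^\top]$ then give ultimate bounds of order $\delta$ on $\mf{e}_0$ and $\sqrt{\delta}$ on $\mf{e}_1$. Finally, $\hat{s}_{i,1}-\hat{s}_{j,1}$ is a linear function of $\mf{e}_1$ and of $\mf{e}_0$ through the $\gamma$-terms, which yields \eqref{eq:accuracy}. I expect this step, making the discontinuous sign perturbation compatible with the Lyapunov decrease without losing homogeneity scaling, to be the main obstacle. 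The first thing to try is to absorb the region $|\mf{d}_\ell^\top\mf{e}_0|\le c\delta$ into the ultimate bound set.

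\textbf{Step 3: minimum inter-event time (item 1).} Step 2 applied with $V$ shows that $V$ stays bounded by $\max\{V(0),V_\delta\}$ along each trajectory. Hence $\mf{e}_0,\mf{e}_1$ are globally bounded. The signals $\dot{\bm{\upeta}}_0$ and $\dot{s}_i$ are also bounded, using Assumption~\ref{as:bound} and the boundedness of $\bm{\upeta}_1$ coming from its stable linear part plus bounded inputs. It follows that $|\dot{\hat{s}}_{i,0}(t)|\le B$ for all $t\ge0$, with $B$ depending on the initial condition.

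This requires boundedness of $\dot{\mf{s}}-\dot{\bar s}\mathds{1}$, which I would add or derive from the setting. If only the disagreement part is bounded, I would instead rewrite the trigger through $\mf{e}_0$: the common component $\bar{\hat s}$ cancels in the differences but not in the individual deviations, so this is a point to check.

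Given the bound $B$, each inter-event interval satisfies $B(\tau^{ij}_{k+1}-\tau^{ij}_k)\ge\delta$, since the threshold is at least $\delta>0$. Therefore $\underline{\tau}=\delta/B$ works for every edge and every $k$.
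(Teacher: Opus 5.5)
Your plan follows the paper's route. Step~1 is Lemma~\ref{lem:trigger:change}. The $\sigma$-part of Step~2 is Lemma~\ref{le:lyap:trigger}, where the paper replaces your compactness/continuity margin by an explicit estimate: for $\delta=0$ and $\sigma\le 1/4$ the signs agree, so the $\sgn{\cdot}{0}$ perturbation vanishes, and H\"older continuity of $|\cdot|^{1/2}$ bounds the rest by $c_\psi\sqrt{2\sigma/(1-2\sigma)}\,V^{2/3}$. This yields the explicit bound \eqref{eq:sigma:condition} rather than mere existence of $\sigma$. The $\delta$-part is delegated, as in your proposal, to the homogeneity argument of \cite{aldana2023}, and Step~3 is Lemma~\ref{le:miet}. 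Your sublevel-set bound $V\le\max\{V(0),V_\delta\}$ is actually the right tool for $\delta>0$, whereas the paper invokes the $\delta=0$ inequality \eqref{eq:lyap:trigger} at that point. Run it on the maximal interval of existence, so that the resulting inter-event bound $\delta/B$ itself excludes Zeno behavior and no circularity arises.

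Two corrections are needed.

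\textbf{The fix you propose for the Step~2 obstacle would fail.} The set where $|\mf{d}_\ell^\top\mf{x}_0|\le c\delta$ for some edge is unbounded: it contains every point with $\mf{x}_0=\mf{0}$ and arbitrary $\mf{x}_1$. So it cannot be absorbed into a bounded ultimate set. Nor is a flipped sign a small additive disturbance, since it perturbs the $\mf{x}_1$-equation by an amount of order $\tilde{k}_1$, independent of $\delta$. What the appeal to homogeneity actually requires is checking $\dot V$ only on the sphere $\|\mf{x}\|_{\mf{r}}=1$. There, the dilation turns $\delta$ into $\delta/\lambda^2$, and the sign-flip set shrinks to a thin slab around the switching surfaces $\{\mf{d}_\ell^\top\mf{x}_0=0\}$. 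The unperturbed decrease holds for every element of the set-valued $\mathcal{S}$, which on those surfaces contains all sign values in $[-1,1]$. Hence upper semicontinuity and compactness keep $\dot V\le -c/2$ there once $\delta/\lambda^2$ is small. Scaling back gives $\dot V<0$ outside a homogeneous ball of radius proportional to $\sqrt{\delta}$.

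\textbf{In Step~3 you have misidentified the missing hypothesis.} The disagreement part of $\dot{\mf{s}}$ needs no assumption. Write $\dot{\hat{s}}_{i,0}=[\dot{\mf{e}}_0]_i+\dot{\bar s}-\tfrac{1}{N}\mathds{1}^\top\dot{\bm{\upeta}}_0$. The first term is bounded through the perturbed error dynamics once $\mf{e}_0,\mf{e}_1,\bm{\varepsilon}$ are. The last term, $\mathds{1}^\top\dot{\bm{\upeta}}_0=\mathds{1}^\top\bm{\upeta}_1-\gamma\mathds{1}^\top\bm{\upeta}_0$, is bounded because the edge terms cancel in pairs. This also avoids your appeal to a stable linear part of $\bm{\upeta}_1$, which is absent when $\gamma=0$. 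What is genuinely needed is $\sup_{t}|\dot{\bar s}(t)|<\infty$. It must be added as a hypothesis; it can be neither derived nor bypassed by rewriting the trigger. Take $s_i(t)=t^2$ for all $i$ with zero initial conditions: Assumption~\ref{as:bound} holds, every $\hat{s}_{i,0}(t)=t^2$, and $\tau^{ij}_k=\sqrt{k\delta}$, so item~1 fails. The paper's Lemma~\ref{le:miet} uses this bound tacitly when it passes from bounded $\mf{x}$ to bounded $\dot{\hat{s}}_{i,0}$.
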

{The proof of Theorem \ref{prop:trigger} is given in detail in Section \ref{sec:trigger:proof}, taking advantage of the abstract super-twisting-based Lyapunov analysis provided in this work.}

\begin{rem}
\label{rem:event}
{The constants $c_0,c_1$ in Theorem~\ref{prop:trigger} depend only on the selected gains. Unlike \cite{aldana2023}, where their existence was only guaranteed, here they can be estimated using the Lyapunov function \eqref{eq:lyapunov} from Theorem~\ref{th:main}. Following the approach for homogeneous differentiators \citep{Cruz2019}, one computes the smallest $\theta\geq 0$ such that the level set $\mathcal{L}_\theta=\{\mf{x}\in\mathcal{X}:V(\mf{x})\leq \theta\}$ satisfies $\dot V<0$ on its boundary, with $\dot V$ taken from \eqref{eq:lyap:trigger:dot} after replacing $U(\mf{x})$, $U^*(\mf{x})$, and $\mathcal{S}(\mf{x})$ by their perturbed versions $U(\mf{x}+\bm{\varepsilon})$, $U^*(\mf{x}+\bm{\varepsilon})$, and $\mathcal{S}(\mf{x}+\bm{\varepsilon})$ under $\bm{\varepsilon}\in[0,2\delta]^{|\mathcal{E}|}$, due to the trigger effect with normalized $\delta=1, \sigma=0$. This ensures trajectories reach and remain in $\mathcal{L}_\varepsilon$. The constants $c_0,c_1$ are then obtained as
\[
c_0=\sup_{\mf{x}\in\mathcal{L}_\theta}\|\mf{x}_0\|
\qquad
c_1=\sup_{\mf{x}\in\mathcal{L}_\theta} \|\mf{x}_1\|
\]
While such numerical procedure requires extensive use of constrained nonlinear optimization programs, this is a much more principled design approach than what was previously found in the literature \citep{edcho, redcho, aldana2023}, where gains $k_0,k_1$ and constants $c_0,c_1$ were found approximately by means of simulations with no real guarantee of their effectiveness.}
\end{rem}

\subsection{{Analysis under event-triggered communication}}
\label{sec:trigger:proof}
{In this section, we provide the proof of Theorem \ref{prop:trigger}. The proof strategy is as follows. First, Lemma \ref{lem:trigger:change} rewrites the error dynamics of the event-triggered system as a disturbed version of the abstract super-twisting error system. Then, in Lemma \ref{le:lyap:trigger} we show that a suitable Lyapunov inequality is satisfied by the disturbed system, which implies finite time stability provided that solutions exist for all \( t \). Lemma \ref{le:miet} establishes the existence of a strictly positive minimum inter-event time for all \( t \), thereby ensuring existence of solutions. These arguments are combined at the end of the section to conclude finite time stability toward the accuracy bounds stated in Theorem \ref{prop:trigger}.}

\begin{lem}
\label{lem:trigger:change}
    {Consider $\mf{x}_0(t),\mf{x}_1(t)$ obtained through the change of variables \eqref{eq:variables:hat:s}, \eqref{eq:variables:e}, \eqref{eq:variables:x} applied to system \eqref{eq:trigger:system} under the trigger rule \eqref{eq:trigger:state} with $\sigma<1/2$. Thus, $\mf{x}_0(t),\mf{x}_1(t)$ have dynamics complying with:
    \begin{equation}
        \label{eq:et:x}
\begin{aligned}
\dot{\mf{x}}_0(t) &= -\gamma \mf{x}_0(t) - \tilde{k}_0\left(\mf{D}\,\sgn{\mf{D}^\top \mf{x}_{0}(t) + \bm{\varepsilon}(t)}{\frac{1}{2}} - \mf{x}_1(t)\right), \\[0.4em]
\dot{\mf{x}}_1(t) &\in -\gamma \mf{x}_1(t) - \tilde{k}_1\left(\mf{D}\,\sgn{\mf{D}^\top \mf{x}_{0}(t)+\bm{\varepsilon}(t)}{0} + \frac{1}{k_1}\mathcal{D}\right),
\end{aligned}
    \end{equation}
    where $\bm{\varepsilon}(t)\in\text{image}(\mf{D}^\top)$ and
    \begin{equation}
    \label{eq:bound:epsilon}
    \|\bm{\varepsilon}(t)\|\leq \frac{2}{1-2\sigma}\left(\frac{\delta\sqrt{|\mathcal{E}|}}{L}+\sigma\|\mf{D}^\top{\mf{x}}_0(t)\|\right),\, \quad \forall t\geq 0.
    \end{equation}}
\end{lem}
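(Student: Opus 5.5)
The plan is to repeat the derivation of the error system \eqref{eq:error:system} for the event-triggered protocol \eqref{eq:trigger:system}, then apply the rescaling of Lemma~\ref{lem:change}, keeping track of the gap between sampled and current relative outputs.

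First, for each edge $\ell=(i,j)$ set
\[
\tilde\varepsilon_\ell(t) := \big(\hat{s}_{i,0}(\tau^{ij}_k)-\hat{s}_{j,0}(\tau^{ij}_k)\big)-\big(\hat{s}_{i,0}(t)-\hat{s}_{j,0}(t)\big),
\]
with $\tau^{ij}_k$ the latest event of that edge. The protocol then becomes \eqref{eq:redcho} in compact form, with $\mf{D}^\top\hat{\mf{s}}_0$ replaced by $\mf{D}^\top\hat{\mf{s}}_0+\tilde{\bm\varepsilon}$. Since $\mf{D}^\top\mathds{1}=0$, we have $\mf{D}^\top\hat{\mf{s}}_0=\mf{D}^\top\mf{e}_0$. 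Applying $\mf{P}$ reproduces \eqref{eq:error:system} with the perturbed arguments inside $\sgn{\cdot}{1/2}$ and $\sgn{\cdot}{0}$, and with $\mf{d}(t)\in L\mathcal{D}$ exactly as before.

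Next, divide by $L$ as in \eqref{eq:variables:x} and set $\bm\varepsilon=\tilde{\bm\varepsilon}/L$. For the $\sgn{\cdot}{1/2}$ term we use $\sgn{L(\mf{D}^\top\mf{x}_0+\bm\varepsilon)}{1/2}=L^{1/2}\sgn{\mf{D}^\top\mf{x}_0+\bm\varepsilon}{1/2}$, and the sign term is scale invariant. Together these give \eqref{eq:et:x} with $\tilde k_0,\tilde k_1$ as in Lemma~\ref{lem:change}. The statement also claims $\bm\varepsilon\in\mathrm{image}(\mf{D}^\top)$. This is immediate if each edge's sampled difference is a difference of some node vector. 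However, in general different edges use different sample times, so I would instead interpret the membership through the componentwise construction and check it against the form required. Honestly, if this fails, the bound is still what matters downstream.

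The main step is the bound \eqref{eq:bound:epsilon}. Fix edge $\ell$ and write $a=\hat{s}_{i,0}(\tau^{ij}_k)-\hat{s}_{j,0}(\tau^{ij}_k)$ and $b(t)=\hat{s}_{i,0}(t)-\hat{s}_{j,0}(t)$. Before the next event, \eqref{eq:trigger:state} keeps each endpoint's deviation below $\delta+\sigma|a|$, so
\[
|\tilde\varepsilon_\ell|\le 2(\delta+\sigma|a|).
\]
Since $|a|\le|b|+|\tilde\varepsilon_\ell|$, this gives $|\tilde\varepsilon_\ell|\le 2\delta+2\sigma|b|+2\sigma|\tilde\varepsilon_\ell|$. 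With $\sigma<1/2$ this rearranges to
\[
|\tilde\varepsilon_\ell|\le \frac{2}{1-2\sigma}\,\big(\delta+\sigma|b|\big).
\]
Taking the Euclidean norm over edges, using the triangle inequality and $\|\delta\mathds{1}\|=\delta\sqrt{|\mathcal{E}|}$, and dividing by $L$ (so that $b/L=\mf{D}^\top\mf{x}_0$ componentwise) yields \eqref{eq:bound:epsilon}.

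The obstacle I expect is the bookkeeping at event instants and the per-edge (asynchronous) sampling. At $t=\tau^{ij}_k$ the error resets to zero, and between events strict inequality holds, so the bound holds for all $t$ wherever solutions are defined. Existence itself is deferred to the minimum inter-event time argument.
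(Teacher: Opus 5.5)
Your proposal follows the paper's proof step for step. It uses the same per-edge splitting of the sampled relative output into the current difference plus an error, the same endpoint bound $\delta+\sigma|a|$ from \eqref{eq:trigger:state}, the same absorption of the $2\sigma|a|$ term using $\sigma<1/2$, the same Euclidean-norm estimate, and the same rescaling through the homogeneity of $\sgn{\cdot}{1/2}$ and $\sgn{\cdot}{0}$. Your direct inequality for $|\tilde\varepsilon_\ell|$ shortcuts the paper's two-step route (first bounding $|a|$, then each endpoint) and gives exactly the per-edge bound \eqref{eq:epsilon:bound}.

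The only divergence is the claim $\bm\varepsilon(t)\in\mathrm{image}(\mf{D}^\top)$, and your doubt there is justified. The paper obtains it by setting $\bm\varepsilon=\mf{D}^\top[\varepsilon_1,\dots,\varepsilon_N]^\top/L$. But $\varepsilon_i(t)=\hat{s}_{i,0}(\tau^{ij}_k)-\hat{s}_{i,0}(t)$ depends on the edge through $\tau^{ij}_k$, so this node vector is well defined only under synchronous events across edges. The claim also holds trivially when $\mathcal{G}$ is a tree, since then $\mf{D}^\top$ is onto $\mathbb{R}^{|\mathcal{E}|}$. On a graph with a cycle, such as the ring of Section~\ref{sec:sim}, the sampled differences taken at different instants generically have nonzero signed sum around the cycle, so the per-edge error vector lies outside $\mathrm{image}(\mf{D}^\top)$.

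Instead of hedging, state this outright and drop or restrict that part of the claim. Nothing downstream needs it. Lemma~\ref{le:lyap:trigger} only uses the per-edge bound \eqref{eq:epsilon:bound} and the H\"older estimate; the per-edge bound is what gives sign preservation, which the Euclidean bound \eqref{eq:bound:epsilon} alone would not. Invariance is also unaffected, because $\mf{D}\mf{v}\in\mathcal{X}$ for every $\mf{v}\in\mathbb{R}^{|\mathcal{E}|}$. For the same reason, record your per-edge inequality as an explicit conclusion of the lemma rather than only as an intermediate step.
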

\begin{pf}
{Start by writing
$$
\hat{s}_{i,0}(\tau^{ij}_{k})-\hat{s}_{j,0}(\tau^{ij}_{k}) = \hat{s}_{i,0}(t)-\hat{s}_{j,0}(t) + \varepsilon_{i}(t)-\varepsilon_{j}(t),
$$
where $\varepsilon_{i}(t) = \hat{s}_{i,0}(\tau^{ij}_{k})-\hat{s}_{i,0}(t)$ and similarly for $\varepsilon_{j}(t)$. Moreover, due to \eqref{eq:trigger:state},
$$
|\varepsilon_{i}(t)|\leq \delta + \sigma |\hat{s}_{i,0}(\tau^{ij}_{k})-\hat{s}_{j,0}(\tau^{ij}_{k})|, \ \forall t \in [\tau_k^{ij}, \tau_{k+1}^{ij}).
$$ Therefore, we have
$$
\begin{aligned}
&|\hat{s}_{i,0}(\tau^{ij}_{k})-\hat{s}_{j,0}(\tau^{ij}_{k})| = |\hat{s}_{i,0}(t)-\hat{s}_{j,0}(t) + \varepsilon_{i}(t)-\varepsilon_{j}(t)| \\
&\leq |\hat{s}_{i,0}(t)-\hat{s}_{j,0}(t)| + |\varepsilon_{i}(t)|+|\varepsilon_{j}(t)|\\ &\leq |\hat{s}_{i,0}(t)-\hat{s}_{j,0}(t)| + 2\delta + 2\sigma |\hat{s}_{i,0}(\tau^{ij}_{k})-\hat{s}_{j,0}(\tau^{ij}_{k})|,
\end{aligned}
$$
which implies
$$
|\hat{s}_{i,0}(\tau^{ij}_{k})-\hat{s}_{j,0}(\tau^{ij}_{k})| \leq \frac{|\hat{s}_{i,0}(t)-\hat{s}_{j,0}(t)| + 2\delta}{1-2\sigma}.
$$
This result enables us to write
$$
\begin{aligned}
&\max\{|\varepsilon_i(t)|,|\varepsilon_j(t)|\} \leq  \delta + \sigma |\hat{s}_{i,0}(\tau^{ij}_{k})-\hat{s}_{j,0}(\tau^{ij}_{k})| \\
&\leq \delta + \sigma\left(\frac{|\hat{s}_{i,0}(t)-\hat{s}_{j,0}(t)|}{1-2\sigma} + \frac{2\delta}{1-2\sigma}\right)\\
&=\frac{\delta}{1-2\sigma} + \frac{\sigma}{1-2\sigma}|\hat{s}_{i,0}(t)-\hat{s}_{j,0}(t)|.
\end{aligned}
$$
Thus, we can obtain
$$
\begin{aligned}
&\hat{s}_{i,0}(\tau^{ij}_{k})-\hat{s}_{j,0}(\tau^{ij}_{k}) = \hat{s}_{i,0}(t)-\hat{s}_{j,0}(t) + \varepsilon_{ij}(t),
\end{aligned}
$$
with $\varepsilon_{ij}(t) = \varepsilon_i(t)-\varepsilon_j(t)$ and
\begin{equation}
\label{eq:epsilon:bound}
|\varepsilon_{ij}(t)|\leq \frac{2\delta}{1-2\sigma} + \frac{2\sigma}{1-2\sigma}|\hat{s}_{i,0}(t)-\hat{s}_{j,0}(t)|.
\end{equation}
Using the equalities above, we write \eqref{eq:trigger:system} as
\begin{equation}
\begin{aligned}
\dot{{\upeta}}_{i,0}(t) &= k_{0}\sqrt{L} \sum_{j\in\mathcal{N}_i}\sgn{\hat{s}_{i,0}(t)-\hat{s}_{j,0}(t) + \varepsilon_{ij}(t)}{1/2} \\ &
+ {\upeta}_{i,1}(t) - \gamma {\upeta}_{i,0}(t),\\[0.8em]
\dot{{\upeta}}_{i,1}(t) &= k_1 L \sum_{j\in\mathcal{N}_i}\sgn{\hat{s}_{i,0}(t)-\hat{s}_{j,0}(t) + \varepsilon_{ij}(t)}{0} 
- \gamma {\upeta}_{i,1}(t),
\end{aligned}
\end{equation}
which is equivalent to the protocol dynamics in \eqref{eq:redcho} with the added disturbance terms $\varepsilon_{ij}(t)$. Taking into account that $\mf{x}_0(t)=\mf{e}_0(t)/L$, we define a disturbance vector $\bm{\varepsilon}(t)=\mf{D}^\top[\varepsilon_1(t),\dots,\varepsilon_N(t)]^\top/L$, which fulfills
$$
\begin{aligned}
    &\|\bm{\varepsilon}(t)\| = \sqrt{\sum_{(i,j)\in\mathcal{E}}(\varepsilon_{ij}(t)/L)^2} \\ &\leq\frac{2}{(1-2\sigma)L}\sqrt{\sum_{(i,j)\in\mathcal{E}}(\delta + \sigma|\hat{s}_{i,0}(t)-\hat{s}_{j,0}(t)| )^2} \\
    &\leq \frac{2}{(1-2\sigma)L}\left({\delta\sqrt{|\mathcal{E}|}} + \sigma\sqrt{\sum_{(i,j)\in\mathcal{E}}(\hat{s}_{i,0}(t)-\hat{s}_{j,0}(t) )^2}\right)\\
    &=\frac{2}{(1-2\sigma)L}\left({\delta\sqrt{|\mathcal{E}|}}+\sigma\|\mf{D}^\top\hat{\mf{s}}_0(t)\|\right)\\
    &=\frac{2}{1-2\sigma}\left(\frac{\delta\sqrt{|\mathcal{E}|}}{L}+\sigma\|\mf{D}^\top{\mf{x}}_0(t)\|\right).
\end{aligned}
$$
Finally, the same process as in Section \ref{sec:proof} and Lemma \ref{lem:change}, lead to the form in \eqref{eq:et:x}, completing the proof. \qed}
\end{pf}
{We now provide the consequences in terms of Lyapunov analysis for the disturbed system.}
\begin{lem}
\label{le:lyap:trigger}
    {Let $\bm{\varepsilon}:\mathbb{R}_{\geq 0}\to\text{image}(\mf{D}^{\top})$ comply with \eqref{eq:bound:epsilon} under $\delta=0$. Moreover, set
    \begin{equation}
    \label{eq:sigma:condition}
        \sigma\leq \min\left\{\frac{1}{4},\frac{1}{2}\left(1+\left(\frac{c\underline{v}}{c_{\psi}}\right)^2\right)^{-1}\right\} 
    \end{equation}
    with $c,\underline{v},c_\psi>0$ obtained from \eqref{eq:c:const}, \eqref{eq:v:const} and \eqref{eq:psi:const} respectively. Thus, there exists $c_\sigma>0$ such that
    \begin{equation}
    \label{eq:lyap:trigger}
    \dot{V} \;\le\; -c_{\sigma} V^{2/3}.
    \end{equation}
    for $V$ defined in \eqref{eq:lyapunov} along solutions of \eqref{eq:et:x}.}
\end{lem}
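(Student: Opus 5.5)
The plan is to treat \eqref{eq:et:x} as a perturbation of \eqref{eq:abstract:sys:2} and to compare the derivative of $V$ along the perturbed system with the unperturbed bound \eqref{eq:c:const}. Set $\mf{y}:=\mf{D}^\top\mf{x}_0$. The perturbed vector field differs from the nominal one only in two terms. The first is $\tilde{k}_0\mf{D}\big(\sgn{\mf{y}+\bm{\varepsilon}}{1/2}-\sgn{\mf{y}}{1/2}\big)$ in the $\mf{x}_0$ equation. The second is $\tilde{k}_1\mf{D}\big(\sgn{\mf{y}+\bm{\varepsilon}}{0}-\sgn{\mf{y}}{0}\big)$ in the $\mf{x}_1$ equation. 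Hence
\[
\dot V\le -c\|\mf{x}\|_{\mf{r}}^2+\Psi(\mf{x},\bm{\varepsilon}),
\]
where $\Psi$ collects these two increments paired with $\nabla U(\mf{x}_0)-\mf{x}_1$ and $(1+\beta)\nabla U^*(\mf{x}_1)-\mf{x}_0$, respectively. The $\gamma$-terms are dropped as in Section~\ref{sec:lyap:convergence}.

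Next I bound $\Psi$ using homogeneity. With $\delta=0$, \eqref{eq:bound:epsilon} gives $\|\bm{\varepsilon}\|\le \frac{2\sigma}{1-2\sigma}\|\mf{y}\|$, so $\bm{\varepsilon}$ scales like $\mf{x}_0$, i.e.\ it has weight $2$. The half-power increment is then controlled by H\"older continuity: $\big|\sgn{a+b}{1/2}-\sgn{a}{1/2}\big|\le \sqrt{2}|b|^{1/2}$. This gives a contribution of order $\sqrt{\sigma/(1-2\sigma)}\,\|\mf{y}\|^{1/2}$, which is of weight $1$ and is multiplied by a factor of weight $1$. For the sign term, the key observation is that for $|\varepsilon_\ell|<|y_\ell|$ the sign of each edge coordinate is unchanged. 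Since $\sigma\le 1/4$ gives $2\sigma/(1-2\sigma)\le 1$, the sign can flip only on edges where $|y_\ell|\le |\varepsilon_\ell|$. On those edges I will use the pairing with $\mf{x}_0$ as in the proof of Theorem~\ref{th:abstract:st}, so that the dangerous part is bounded by a term of weight $2$ proportional to $\|\bm{\varepsilon}\|$ normalized by $\|\mf{x}\|_{\mf{r}}$.

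Combining these, I expect an estimate of the form
\[
\Psi(\mf{x},\bm{\varepsilon})\le c_\psi\sqrt{\tfrac{2\sigma}{1-2\sigma}}\,\|\mf{x}\|_{\mf{r}}^2,
\]
which I take as the definition of $c_\psi$ in \eqref{eq:psi:const}. The constant is obtained as a supremum over the compact homogeneous sphere $\|\mf{x}\|_{\mf{r}}=1$ of a continuous-in-the-rescaling bound, in the spirit of Lemma~\ref{le:gain:ineq}. Condition \eqref{eq:sigma:condition} is equivalent to $\frac{2\sigma}{1-2\sigma}\le (c_\psi/(c\underline{v}))^{2}$. It therefore makes $c_\psi\sqrt{2\sigma/(1-2\sigma)}\le c\underline{v}$, and this will be arranged, possibly with a strict margin, so that
\[
\dot V\le -\big(c-c_\psi\sqrt{\tfrac{2\sigma}{1-2\sigma}}\big)\|\mf{x}\|_{\mf{r}}^2 .
\]
Applying \eqref{eq:v:const} then yields \eqref{eq:lyap:trigger} with $c_\sigma>0$. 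The precise bookkeeping between $c$ and $c\underline{v}$ must match the stated condition, and I will adjust the definition of $c_\psi$ accordingly.

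The main obstacle is the discontinuous sign term. Its increment is not small in magnitude when a sign flips, so a Lipschitz argument is unavailable. I will first try to restrict to edges with $|y_\ell|\le|\varepsilon_\ell|\le\frac{2\sigma}{1-2\sigma}\|\mf{y}\|$ and bound the factor $\langle(1+\beta)\nabla U^*(\mf{x}_1)-\mf{x}_0,\mf{D}(\cdot)\rangle$ on the sphere. If this does not produce a vanishing coefficient as $\sigma\to0$, the fallback is to use the semicontinuity of the set-valued $\mathcal{S}$ together with compactness of the sphere. That shows the supremum of $\dot V$ over $\|\bm{\varepsilon}\|\le\rho\|\mf{y}\|$ tends to the nominal negative value as $\rho\to0$, with $c_\psi$ quantifying this rate.
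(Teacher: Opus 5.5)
Your overall architecture matches the paper's. You write \eqref{eq:et:x} as \eqref{eq:abstract:sys:2} plus perturbation terms, keep the nominal decay, control the half-power increment by H\"older continuity, and use homogeneity to bound the resulting cross term by $c_\psi\sqrt{2\sigma/(1-2\sigma)}\,V^{2/3}$. The gap is the sign term, which you correctly flag as the main obstacle but do not resolve.

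You work only with the aggregate bound $\|\bm{\varepsilon}\|\le\frac{2\sigma}{1-2\sigma}\|\mf{D}^\top\mf{x}_0\|$. From it, $\sigma\le 1/4$ buys nothing edge by edge; your sentence that a sign can flip only where $|y_\ell|\le|\varepsilon_\ell|$ holds for every $\sigma$. With only this bound, the estimate $\Psi\le c_\psi\sqrt{2\sigma/(1-2\sigma)}\,\|\mf{x}\|_{\mf{r}}^2$ you aim for is false. Take a point of the sphere $\|\mf{x}\|_{\mf{r}}=1$ where some $y_\ell\neq 0$ is tiny but the $\ell$-th entry of $\mf{D}^\top((1+\beta)\nabla U^*(\mf{x}_1)-\mf{x}_0)$ is not; this is possible since $\mf{x}_1$ is free. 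An admissible $\bm{\varepsilon}\in\mathrm{image}(\mf{D}^\top)$ of norm far below $\frac{2\sigma}{1-2\sigma}\|\mf{y}\|$ can flip $\mathrm{sign}(y_\ell)$. That adds $2\tilde{k}_1$ times that entry (up to sign) to $\dot V$, and this contribution does not shrink as $\sigma\to 0$.

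Your fallback via upper semicontinuity and compactness can at best give $\dot V\le -c'V^{2/3}$ for all sufficiently small $\sigma$. Even that needs \eqref{eq:c:const} to hold for every selection of the set-valued sign at $y_\ell=0$. It yields no rate in $\sigma$, hence not the explicit threshold \eqref{eq:sigma:condition}.

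The missing ingredient is the per-edge bound \eqref{eq:epsilon:bound} established in the proof of Lemma~\ref{lem:trigger:change}. The statement only quotes its aggregate consequence \eqref{eq:bound:epsilon}, but the paper's proof relies on the per-edge version. With $\delta=0$ and after scaling by $L$, it reads $|\varepsilon_{ij}|\le\frac{2\sigma}{1-2\sigma}|w_{ij}|$, where $w_{ij}$ is the \emph{same} edge's entry of $\mf{w}=\mf{D}^\top\mf{x}_0$.

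The paper uses exactly this. For $w_{ij}\ge 0$ it gets $w_{ij}+\varepsilon_{ij}\ge\frac{1-4\sigma}{1-2\sigma}w_{ij}\ge 0$, which is where $\sigma\le 1/4$ enters, and symmetrically for $w_{ij}\le 0$. Hence every edge keeps its sign (strictly so for $\sigma<1/4$), and the sign perturbation $\bm{\psi}_1$ vanishes identically. Only the half-power term remains, and it is handled as you propose, with constant $1$ instead of $\sqrt{2}$ since no sign changes.

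Separately, recheck your last algebraic step. Condition \eqref{eq:sigma:condition} is indeed equivalent to $\frac{2\sigma}{1-2\sigma}\le (c_\psi/(c\underline{v}))^2$. But this gives $c\underline{v}\sqrt{2\sigma/(1-2\sigma)}\le c_\psi$, not $c_\psi\sqrt{2\sigma/(1-2\sigma)}\le c\underline{v}$. Positivity of $c_\sigma=c\underline{v}-c_\psi\sqrt{2\sigma/(1-2\sigma)}$, asserted in the paper's final line, requires $\sigma<\frac{1}{2}(1+(c_\psi/(c\underline{v}))^2)^{-1}$. So the ratio in \eqref{eq:sigma:condition} as printed is inverted, and you should not claim that it delivers the inequality you need.
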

\begin{pf}
    {First, write \eqref{eq:et:x} as
    \begin{equation}
\label{eq:abstract:disturbed}
\begin{aligned}
\dot{\mf{x}}_0 &= -\gamma \mf{x}_0 - \tilde{k}_0\big(\nabla U(\mf{x}_0) - \mf{x}_1\big)+\bm{\psi}_0(\mf{x}_0,\bm{\varepsilon}), \\[0.4em]
\dot{\mf{x}}_1 &\in -\gamma \mf{x}_1 - \tilde{k}_1\left(\mathcal{S}(\mf{x}_0) + \frac{1}{k_1}\mathcal{D}\right) + \bm{\psi}_1(\mf{x}_0,\bm{\varepsilon}),
\end{aligned}
\end{equation}
with
$$
\begin{aligned}
\bm{\psi}_0(\mf{x}_0,\bm{\varepsilon}) &:= \tilde{k}_0\mf{D}\left(\sgn{\mf{D}^\top \mf{x}_{0}}{\frac{1}{2}} - \sgn{\mf{D}^\top \mf{x}_{0} + \bm{\varepsilon}}{\frac{1}{2}}\right) \\
\bm{\psi}_1(\mf{x}_0,\bm{\varepsilon}) &:= \tilde{k}_1\mf{D}\left(\sgn{\mf{D}^\top \mf{x}_{0}}{0} - \sgn{\mf{D}^\top \mf{x}_{0} + \bm{\varepsilon}}{0}\right).
\end{aligned}
$$
Now, let $\mf{w}=\mf{D}^\top\mf{x}_0$ and denote with ${w}_{ij}$ the component of $\mf{w}$ corresponding to edge $(i,j)$. Assume $w_{ij}\geq 0$ to conclude
$$
{w}_{ij} + {\varepsilon}_{ij} \geq w_{ij} - \frac{2\sigma}{1-2\sigma}|w_{ij}| = \left(\frac{1-4\sigma}{1-2\sigma}\right)w_{ij}\geq 0.
$$
where we used \eqref{eq:epsilon:bound}. Likewise, assume $w_{ij}\leq 0$ to conclude $w_{ij} + \varepsilon_{ij}\leq0$. As a result:
$$
\text{sign}(w_{ij} + {\varepsilon}_{ij}) = \text{sign}(w_{ij}).
$$
Therefore, $\bm{\psi}_1(\mf{x}_0,\bm{\varepsilon})\equiv \mf{0}$ for all $\mf{x}_0$ and admissible $\bm{\varepsilon}$. For $\bm{\psi}_0(\mf{x}_0,\bm{\varepsilon})$ note that
$$
\begin{aligned}
&\left|\sgn{w_{ij}}{1/2}-\sgn{w_{ij}+\varepsilon_{ij}}{1/2}\right| = \left||w_{ij}|^{1/2} - |w_{ij}+\varepsilon_{ij}|^{1/2}\right|\\ &\leq \left|(w_{ij}) - (w_{ij}+\varepsilon_{ij})\right|^{1/2} = |\varepsilon_{ij}|^{1/2}
\end{aligned}
$$
since ${|\bullet|^{1/2}}$ is Hölder continuous. Therefore,
$$
\begin{aligned}
&\|\bm{\psi}_0(\mf{x}_0,\varepsilon)\| \leq \tilde{k}_0\|\mf{D}\| \left\| \sgn{\mf{D}^\top \mf{x}_{0}}{\frac{1}{2}} - \sgn{\mf{D}^\top \mf{x}_{0} + \bm{\varepsilon}}{\frac{1}{2}} \right\| \\
&\leq \tilde{k}_0\|\mf{D}\| \sqrt{\sum_{(i,j)\in\mathcal{E}}|\varepsilon_{ij}| } \leq \tilde{k}_0|\mathcal{E}|^{1/4}\|\mf{D}\|  \sqrt{\frac{2\sigma}{1-2\sigma}} \sqrt{\|\mf{D}^\top\mf{x}_0\|} \\
\end{aligned}
$$
As a result:
\begin{equation}
\label{eq:psi:const}
\begin{aligned}
    &\langle\nabla U(\mf{x}_0) - \mf{x}_1, \bm{\psi}_0(\mf{x}_0,\varepsilon)\rangle \leq \|\nabla U(\mf{x}_0) - \mf{x}_1\|\|\bm{\psi}_0(\mf{x}_0,\varepsilon)\| \\
    & \leq \tilde{k}_0|\mathcal{E}|^{1/4}\|\mf{D}\|  \sqrt{\frac{2\sigma}{1-2\sigma}} \|\nabla U(\mf{x}_0) - \mf{x}_1\|\sqrt{\|\mf{D}^\top\mf{x}_0\|} \\ 
    &\leq c_\psi\sqrt{\frac{2\sigma}{1-2\sigma}} V(\mf{x}_0,\mf{x}_1)^{2/3}
\end{aligned}
\end{equation}
for an appropriate $c_\psi\geq 0$, independent of $\sigma$, computed according to Lemma \ref{le:comparison} in the Appendix. Now, we compute the dynamics of $V$ following the procedure in Section \ref{sec:lyap:convergence}, in particular up to \eqref{eq:lyap:ineq}, and we obtain:
\begin{equation}
\label{eq:lyap:trigger:dot}
\begin{aligned}
&\dot{V} \leq -c\underline{v}V^{2/3} + \langle\nabla U(\mf{x}_0) - \mf{x}_1, \bm{\psi}(\mf{x}_0,\bm{\varepsilon})\rangle \\
&\leq -c\underline{v}V^{2/3} + c_\psi \sqrt{\frac{2\sigma}{1-2\sigma}} V^{2/3} \leq -c_\sigma V^{2/3} 
\end{aligned}
\end{equation}
with 
$$
c_\sigma := c\underline{v} - c_\psi\sqrt{\frac{2\sigma}{1-2\sigma}} >0
$$
by the choice of $\sigma$ in the statement of the lemma, completing the proof. \qed}
\end{pf}
{In the following result, we study the inter-event time of the protocol.}
\begin{lem}
\label{le:miet}
{Consider the conditions of Theorem \ref{prop:trigger}. Thus, for every initial condition, there exists a strictly positive minimum inter-event-time $\underline{\tau}>0$ such that $\tau_{k+1}^{ij}-\tau_{k}^{ij}\geq \underline{\tau}$, for all $(i,j)\in\mathcal{E}, k\geq 0$.}
\end{lem}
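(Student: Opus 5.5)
Between two consecutive events on edge $(i,j)$, the monitored deviation must grow from $0$ to the threshold $\delta+\sigma|\hat{s}_{i,0}(\tau^{ij}_k)-\hat{s}_{j,0}(\tau^{ij}_k)|\geq\delta$. I will therefore bound the speed $|\dot{\hat{s}}_{i,0}(t)|$ uniformly in time along each trajectory. If $|\dot{\hat{s}}_{i,0}(t)|\leq B$ for all $t\geq 0$, with $B$ depending only on the initial condition, then the argument already used in the proof of Proposition~\ref{prop:trigger:asymp} applies without restriction to a compact interval. It gives
\[
\delta\leq|\hat{s}_{i,0}(\tau^{ij}_{k+1})-\hat{s}_{i,0}(\tau^{ij}_k)|\leq B(\tau^{ij}_{k+1}-\tau^{ij}_k),
\]
so that $\underline{\tau}=\delta/B$. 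The key point is that $\delta>0$ is a fixed constant, unlike the vanishing thresholds in Proposition~\ref{prop:trigger:impossibility}. The whole task thus reduces to a global-in-time bound on the output speed.

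To get $B$, I differentiate $\hat{s}_{i,0}=s_i-\upeta_{i,0}$ using \eqref{eq:trigger:system}:
\[
\dot{\hat{s}}_{i,0}=\dot s_i-\upeta_{i,1}+\gamma\upeta_{i,0}-k_0\sqrt{L}\sum_{j\in\mathcal{N}_i}\sgn{\hat{s}_{i,0}(\tau^{ij}_k)-\hat{s}_{j,0}(\tau^{ij}_k)}{1/2}.
\]
Using \eqref{eq:variables:hat:s}, the term $\dot s_i-\upeta_{i,1}+\gamma\upeta_{i,0}$ equals $\hat{s}_{i,1}-\gamma\hat{s}_{i,0}$, and I split this into consensus errors plus average components. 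The average components $\mathds{1}^\top\hat{\mf{s}}_\mu/N$ obey linear dynamics forced by $\bar s,\dot{\bar s}$, since the coupling terms are annihilated by $\mathds{1}^\top\mf{D}=0$. They decay to the true averages at rate $\gamma$ (or stay constant when $\gamma=0$), which is the mechanism behind the asymptotic recovery in Theorem~\ref{th:main}. The consensus errors are $\mf{e}_0=L\mf{x}_0$ and $\mf{e}_1=k_0L\mf{x}_1$. Lemma~\ref{le:lyap:trigger}, whose bound holds for $\delta>0$ once the extra $\delta$-term is absorbed, gives $\dot V\leq 0$ whenever $V$ exceeds a level set of size $O(\delta^{3})$ in the homogeneous scaling. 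Together with radial unboundedness from Lemma~\ref{le:gamma}, this keeps $\mf{x}$ in the compact set $\{V\leq\max(V(\mf{x}(0)),\theta_\delta)\}$, so $\mf{x}_0,\mf{x}_1$ stay bounded. Finally, the held argument $\hat{s}_{i,0}(\tau^{ij}_k)-\hat{s}_{j,0}(\tau^{ij}_k)$ is bounded by $(|\hat{s}_{i,0}(t)-\hat{s}_{j,0}(t)|+2\delta)/(1-2\sigma)$, as shown in the proof of Lemma~\ref{lem:trigger:change}. Hence the sum of fractional powers is bounded as well. This bounding needs a standing bound on $\dot s_i$ (or on $\dot{\bar s}-\dot s_i$ together with $\dot{\bar s}$), which I will take from the signal regularity underlying Assumption~\ref{as:bound}. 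Collecting these pieces gives $B$.

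The main obstacle is circularity. The Lyapunov estimate of Lemma~\ref{le:lyap:trigger} is valid only on intervals where solutions exist and events do not accumulate, while the speed bound uses that estimate. I would break the circle with a maximal-interval argument. Let $T^\ast$ be the supremum of times up to which the solution exists with finitely many events on each compact subinterval. Proposition~\ref{prop:trigger:asymp} already gives $T^\ast>0$. On $[0,T^\ast)$ the Lyapunov bound holds, so $B$ is a bound independent of $T^\ast$. The inter-event estimate $\delta/B$ then shows that events cannot accumulate at $T^\ast$, and the solution can be continued past $T^\ast$ if $T^\ast<\infty$, a contradiction. Hence $T^\ast=\infty$, and $\underline{\tau}=\delta/B$ serves for all $k\geq 0$ and all edges.
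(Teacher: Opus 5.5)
Your architecture is the same as the paper's. You rule out accumulation of events on finite horizons, use $V$ from \eqref{eq:lyapunov} to keep $\mf{x}_0,\mf{x}_1$ bounded on $[0,\infty)$, and turn that into a uniform bound $B$ on $|\dot{\hat s}_{i,0}|$, which gives $\underline{\tau}=\delta/B$. The paper phrases this last step as a contradiction with $\delta/(2B_s)$.

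In two places you are more careful than the paper.
\begin{itemize}
\item Your decomposition, with $\mathds{1}^\top\mf{D}=\mf{0}^\top$ and $\tau^{ij}_k=\tau^{ji}_k$, amounts to $\dot{\hat s}_{i,0}=e_{1,i}-\gamma e_{0,i}+\dot{\bar s}-\bar\upeta_1+\gamma\bar\upeta_0-k_0\sqrt{L}\sum_{j\in\mathcal{N}_i}\sgn{\hat{s}_{i,0}(\tau^{ij}_k)-\hat{s}_{j,0}(\tau^{ij}_k)}{1/2}$. Here $\bar\upeta_\mu:=\frac1N\mathds{1}^\top\bm{\upeta}_\mu$ and $e_{\mu,i}$ is the $i$-th entry of $\mf{e}_\mu$. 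This isolates exactly what must be bounded, whereas the paper only asserts that bounded $\mf{x}$ bounds the right-hand side of \eqref{eq:trigger:system}. A small correction: for $\gamma=0$ the offset $\bar\upeta_0$ drifts linearly rather than staying constant. This is harmless, because it is multiplied by $\gamma$.
\item For $\delta>0$ you claim only $\dot V\le 0$ outside a sublevel set. The paper instead invokes Lemma~\ref{le:lyap:trigger}, which is stated for $\delta=0$. That sublevel set has size $O(\delta^{3/2})$, not $O(\delta^{3})$, because $\delta$ carries weight $2$. Also, ``absorbing'' $\delta$ really needs the homogeneous-perturbation (semicontinuity) argument: for $\delta>0$ the sign-preservation step that gives $\bm{\psi}_1\equiv\mf{0}$ fails near $w_{ij}=0$.
\end{itemize}

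The step that does not go through is the bound on $\dot{\bar s}$, which you take from ``the signal regularity underlying Assumption~\ref{as:bound}''. That assumption only constrains the mismatches $\bar s-s_i$, and no hypothesis of Theorem~\ref{prop:trigger} bounds $\dot{\bar s}$. Without such a bound the lemma is false. Take $s_i(t)=t^2$ for every $i$, so Assumption~\ref{as:bound} holds for any $L\ge 0$, and $\bm{\upeta}_0(0)=\bm{\upeta}_1(0)=\mf{0}$. All held differences stay $0$, so $\bm{\upeta}\equiv\mf{0}$, $\hat s_{i,0}(t)=t^2$, and the threshold is $\delta$. Events then occur at $\tau^{ij}_k=\sqrt{k\delta}$, so $\tau^{ij}_{k+1}-\tau^{ij}_k\to 0$.

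The paper's proof uses the same unstated hypothesis, hidden in the sentence bounding the right-hand side of \eqref{eq:trigger:system}. Read literally, that sentence needs even more: bounded $\dot s_i$ and bounded $\bm{\upeta}$.

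Your maximal-interval part survives. If $T^\ast<\infty$, then $\dot{\bar s}$ is bounded on $[0,T^\ast]$ by continuity, so events cannot accumulate there; this matches the example, where $\tau^{ij}_k\to\infty$. Only the uniform $\underline{\tau}$ needs $\sup_{t\ge 0}|\dot{\bar s}(t)|<\infty$. Your decomposition shows this hypothesis suffices, and the example shows it cannot be dropped, so state it explicitly.

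One more minor point: Proposition~\ref{prop:trigger:asymp} does not literally apply to the state-dependent threshold, since that threshold does not vanish. The compact-interval half of its proof does carry over, because the threshold is at least $\delta$.
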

\begin{pf}
    {First, note that there are no escapes in finite time for \eqref{eq:trigger:system}. Therefore, $|\dot{\hat{s}}_{i,0}(t)|\leq B_s$ for all $t$ in a compact time interval $[0,T_0]$, for some $B_s$ which depends on the interval length $T_0$ and the initial conditions for $\hat{\mf{s}}_0(0) = [\hat{{s}}_{1,0}(0),\dots,\hat{{s}}_{N,0}(0)]^{\top}$. Assume, for a contradiction, that there exists $T>0$ such that $\lim_{k\to\infty}\tau^{ij}_k=T$. Thus, there exists $k\geq 0$ such that
    \begin{equation}
    \label{eq:no:miet}
    \tau_{k+1}^{ij} - \tau_k^{ij}\leq \frac{\delta}{2B_s}
    \end{equation}
    Therefore, for such $k$:
    $$
    \begin{aligned}
    &\delta\leq  |\hat{s}_{i,0}(\tau_{k+1}^{ij}) - \hat{s}_{i,0}(\tau_k^{ij})|  =    \left|\int_{\tau_k^{ij}}^{\tau_{k+1}^{ij}}\dot{\hat{s}}_{i,0}(\tau)\text{d}\tau\right| \\&  \leq (\tau_{k+1}^{ij} - \tau_k^{ij})B_s \leq \frac{\delta}{2}
    \end{aligned}
    $$
    which is a contradiction. Thus, $\lim_{k\to\infty}\tau_k^{ij}=\infty$. As a result, solutions to \eqref{eq:trigger:system} are maximally defined over $t\in[0,\infty)$ and no Zeno phenomenon occurs. Over the unbounded time interval $[0,\infty)$, Lemma \ref{le:lyap:trigger} implies $\dot{V}\leq 0$ from \eqref{eq:lyap:trigger}. Since, $V$ is positive definite and radially unbounded from Lemma \ref{le:gamma}, then it follows that $\mf{x}_0(t),\mf{x}_1(t)$ are bounded for all $t\geq 0$. Therefore, the right hand side of \eqref{eq:trigger:system} is upper bounded for all $t\geq 0$ and, as a result, $|\dot{\hat{s}}_{i,0}(t)|\leq B_s$ for some $B_s$ which depends only on the initial conditions. Finally, assume for a contradiction that there is no strictly positive minimum inter-event time. Therefore, there exists $k\geq 0$ such that \eqref{eq:no:miet} is complied, leading again to a contradiction, completing the proof. \qed}
\end{pf}

\begin{pf}[Of Theorem \ref{prop:trigger}]
{First, Lemma \ref{le:miet} shows that Proposition \ref{prop:trigger}-(1) is true. For item (2), note that $\delta=0$ can induce Zeno behavior only by allowing not maximally defined disturbances $\bm{\varepsilon}(t)$ in \eqref{eq:et:x}, in which these are not defined after some finite time $T$. For any other admissible disturbance defined over the unbounded time interval $t\in[0,\infty)$, Lemma \ref{le:lyap:trigger} implies \eqref{eq:lyap:trigger}, which in turn implies finite time convergence towards the origin when $\delta=0$. In this case, note that $\mf{D}^\top\mf{x}_0$ and the admissible set for $\bm{\varepsilon}$ in \eqref{eq:bound:epsilon} are both set valued functions of $\mf{x}_0$ with the same homogeneity degree. Therefore, allowing $\delta>0$ implies an additive bounded disturbance over $\bm{\varepsilon}$, which results in practical convergence, with accuracy scaling as \eqref{eq:accuracy} by following standard homogeneity arguments as in \cite{aldana2023} for the homogeneous differential inequality \eqref{eq:lyap:trigger}. \qed }
\end{pf}

\section{Discussion}
\label{sec:discussion}
{The results of this work highlight the structural and practical benefits of a Lyapunov analysis based on homogeneity for distributed differentiation. Homogeneity is central to the proposed approach, as it directly shapes the robustness properties of the distributed algorithm. Homogeneous sliding mode based differentiators achieve steady state accuracy bounds that scale with the square root of the disturbance magnitude. As shown in \citep{seeber2023a}, no differentiator can achieve a better asymptotic accuracy when estimating derivatives from measurements corrupted by disturbances of magnitude $\delta$.}

{Consistently with this fundamental limit, under bounded noise or triggering induced disturbances of magnitude $\delta$, the estimation error converges to a neighborhood of order $\sqrt{\delta}$. Note that in our case, such disturbances can come in the form of measurement noise, delays or event-triggered communication. The accuracy analysis was formally established for distributed differentiators of the REDCHO type under measurement noise in \citep{aldana2023}, with analysis under delays in \cite{aldana2025}, still without explicit computation of scaling constants. The present work shows that the same optimal scaling is preserved under properly designed event-triggered communication as in Theorem \ref{prop:trigger}.}

{It is instructive to compare this approach with a possible alternative using methods from the literature. One first computes the average of the input signals $s_i(t)$ using a DAC algorithm, and then differentiates the resulting average using a local differentiator. Average consensus can be implemented using linear protocols \citep{Solmaz2017} or exact first order sliding mode consensus schemes \citep{freeman2019}. When combined with event-triggered communication or measurement noise, these cascaded designs lose homogeneity. Linear consensus dynamics and nonlinear differentiation lead to mixed degree behavior, so homogeneity based robustness arguments no longer apply. In the linear case, additional steady state errors are introduced and combined with noise effects. In the first order sliding mode case, exact convergence can be achieved ideally, but chattering appears under noise and is further amplified when injected into the differentiator. As a result, the $\sqrt{\delta}$ accuracy scaling for derivative estimation under noise or triggering cannot be guaranteed. This structural loss directly limits the achievable performance guarantees of these competing approaches.}

{As for the limitations of the Lyapunov approach, note that the Lyapunov function applied to the distributed differentiator depends explicitly on the graph structure through the incidence matrix. As a result, for time-varying topologies the Lyapunov function itself would change over time, preventing the use of a single potential function $U$ to establish convergence. Moreover, for directed graphs, the distributed dynamics may not admit an underlying convex potential function, which is a key ingredient in the homogeneity-based Lyapunov analysis. For these reasons, extending the results to directed or time-varying graphs is not trivial and may require different analytical tools, which are left for future work.}

{We now discuss the event-triggered aspect of the proposed approach. A closely related work is \citep{xu2024}, which achieves asymptotic exactness for dynamic average consensus under event-triggered communication. That work relies on a boundary layer approximation of the sign function, which asymptotically recovers exact tracking and can be interpreted as employing a vanishing triggering threshold. In this sense, its result is closely aligned with Proposition \ref{prop:trigger:asymp}. Moreover, in \citep{xu2024}, Zeno behavior is excluded. However, the corresponding minimum inter-event time guarantees hold only over compact time intervals. This limitation is not incidental. It is a direct consequence of Proposition \ref{prop:trigger:impossibility}, which shows that when signals keep varying persistently and the triggering threshold vanishes asymptotically, a strictly positive minimum inter-event time over $[0,\infty)$ cannot exist. This fundamental limitation motivates the state dependent triggering strategy proposed in Theorem \ref{prop:trigger}, which enforces a strictly positive minimum inter-event time by trading asymptotic exactness for a controlled terminal accuracy.}

\section{Numerical Example}
\label{sec:sim}
We now illustrate the performance of the event-triggered distributed differentiator in \eqref{eq:trigger:system}.
The simulation was carried out using a forward Euler scheme with step $\Delta t=10^{-4}\mathsf{s}$ over a horizon of $T=10\mathsf{s}$. The network consists of $N=5$ agents arranged in a ring topology.  
Each agent measures a sinusoidal signal $s_i(t)=\sin(\omega_i t+\phi_i)$ with frequencies $\omega_i\in\{1.73,0.58,1.12,0.37,1.95\}$ and phases $\phi_i\in\{0.27,1.66,0.09,1.92,0.45\}$.  
The collective objective is to track the average derivative $\dot{\bar s}(t)$.  
The parameters are set to $k_0=4.0 > 3.45$, $k_1=13> 12.4$ complying with the conditions \eqref{eq:gains} where the lower bounds were found as in Remark \ref{rem:gains}. Moreover, we set $\gamma=1.0$, and $L=4.0$ satisfying Assumption \ref{as:bound}.  For the first experiment, the triggering threshold is fixed at {$\delta_{ij}(t)=\delta=0.02$}.  All constrained optimization programs required in the procedures of Remarks~\ref{rem:gains} and~\ref{rem:event} were implemented using the \texttt{minimize} routine from the \texttt{scipy} package.

Figure~\ref{fig:sim} summarizes the results. The top panel shows the estimated derivatives $\hat{s}_{i,1}(t)$ together with the true average derivative $\dot{\bar s}(t)$ for a trigger threshold $\delta=0.02$. All agents correctly recover the average trajectory with small steady-state deviations induced by the trigger. The middle panel depicts the absolute errors $|\hat{s}_{i,1}(t)-\dot{\bar s}(t)|$, which converge to a neighborhood of the origin whose size scales with $\sqrt{\delta}$, in agreement with Theorem~\ref{prop:trigger}. The theoretical bound $c_1\sqrt{\delta}$ with $c_1=7.9$ is also shown, where it can be observed that the steady state error never exceeds the guaranteed bound. The bottom-left panel reports the maximum steady-state error as a function of $\delta$, obtained by sweeping $\delta\in[0,0.14]$ across 100 experiments and measuring the error after $0.8T$ seconds. The numerical results never exceed the guaranteed predicted $c_1\sqrt{\delta}$ bound, where some expected conservativeness is observed. Finally, the bottom-right panel shows the corresponding fraction of events relative to continuous transmission, illustrating that larger values of $\delta$ substantially reduce communication load.

{Now, we repeat the experiment with other time varying and state dependent triggering alternatives. Figure~\ref{fig:trigger} shows the behavior of the event-triggered distributed differentiator under three representative choices of the threshold \(\delta_{ij}(t)\). As before, with a constant threshold \(\delta_{ij}(t)=\delta\) (left column), the algorithm exhibits practical convergence. After a short transient, the tracking error remains bounded within a neighborhood whose size is determined by \(\delta\), while inter-event times stay strictly positive and do not show any long term degradation. When a vanishing threshold \(\delta_{ij}(t)=\delta \exp(-t/2)\) is employed (middle column), the tracking error decays asymptotically to zero, confirming asymptotic exactness of the distributed differentiator. However, this improvement in accuracy is accompanied by progressively smaller inter-event times, reflecting the increasing communication effort required to maintain exact tracking under persistently varying signals. Finally, the state dependent trigger \(\delta_{ij}(t)=\delta+\sigma|\hat{s}_{i,0}(\tau_k)-\hat{s}_{j,0}(\tau_k)|\) (right column) provides an intermediate behavior. During the transient, larger thresholds reduce communication when agents are far from agreement, while in steady state the threshold effectively saturates at \(\delta\), preventing arbitrarily fast triggering. As a result, inter-event times remain uniformly bounded away from zero while the steady state tracking error remains small and tunable through \(\delta\), illustrating the intended tradeoff between communication sparsity and terminal accuracy.}

\begin{figure}[t]
    \centering
    \includegraphics[width=\linewidth]{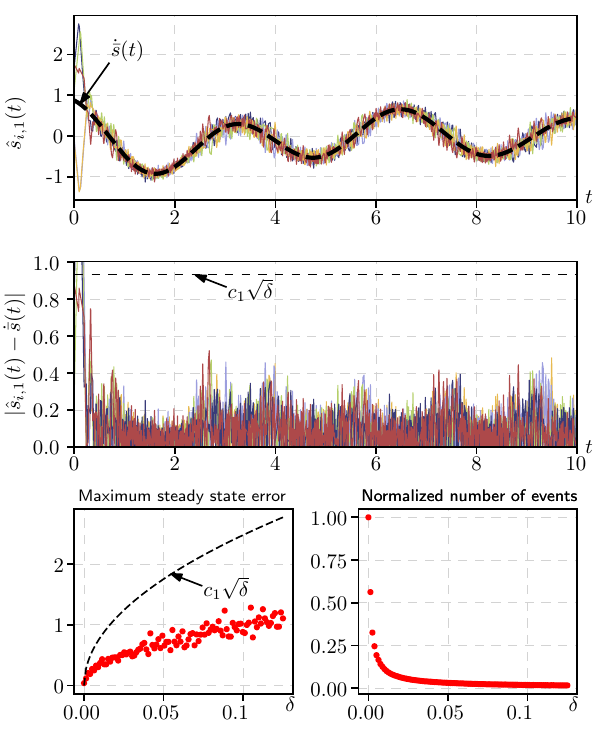}
    \caption{Simulation of the event-triggered proposed protocol. 
    Top: estimated derivatives $\hat{s}_{i,1}(t)$ (solid) and true average derivative $\dot{\bar{s}}(t)$ (black dashed) for an event-trigger with threshold $\delta=0.02$. 
    Middle: absolute errors $|\hat{s}_{i,1}(t)-\dot{\bar{s}}(t)|$. The theoretical steady state error bound $c_1\sqrt{\delta}$ with $c_1=7.9, \delta=0.02$ is shown (black dashed).
    Bottom left: steady-state error versus triggering threshold $\delta$ (red dots) with theoretical $c_1\sqrt{\delta}$ bound (black dashed). {Here, $\delta=0$ corresponds to the performance of the standard REDCHO \cite{redcho}.} 
    Bottom right: fraction of the total number of events relative to the case of full transmission.}
    \label{fig:sim}
\end{figure}

\begin{figure*}[t]
    \centering
    \includegraphics[width=\linewidth]{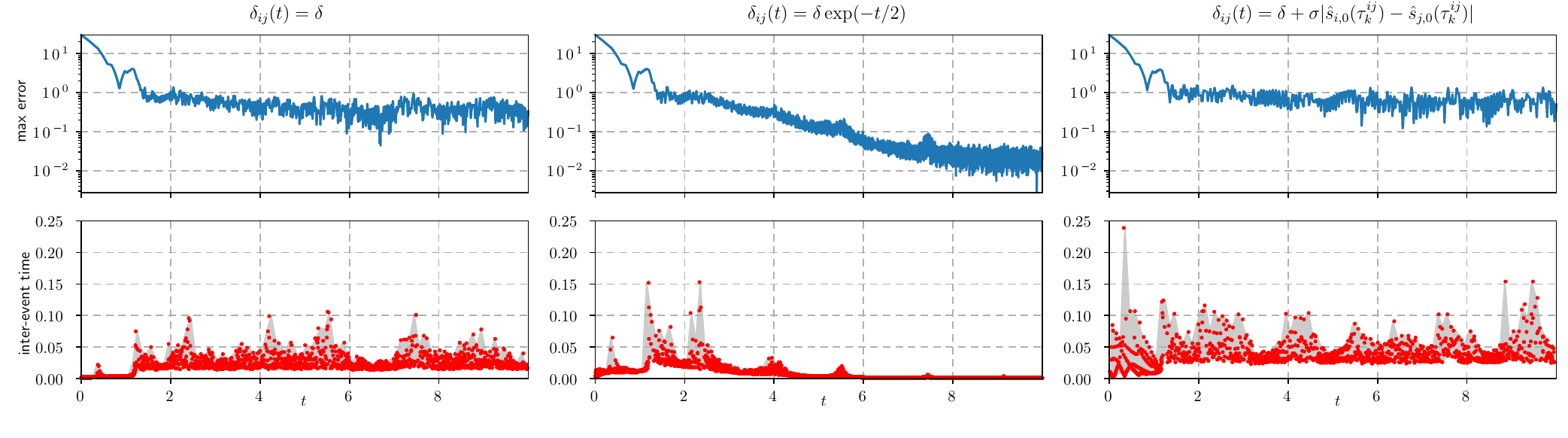}
    \caption{{Comparison of three event thresholds $\delta_{ij}(t)$ in the event-triggered distributed differentiator. Left column: constant threshold $\delta_{ij}(t)=\delta$, yielding practical convergence to a steady state error floor while preserving strictly positive inter-event times. Middle column: vanishing threshold $\delta_{ij}(t)=\delta\exp(-t/2)$, yielding asymptotically exact tracking at the cost of progressively smaller inter-event times. Right column: state dependent threshold $\delta_{ij}(t)=\delta+\sigma|\hat{s}_{i,0}(\tau_{k}^{ij})-\hat{s}_{j,0}(\tau_{k}^{ij})|, \sigma=0.15$, yielding a favorable compromise in which inter-event times remain bounded away from zero along the run, while the steady state error remains small and tunable through $\delta$. Top row: representative maximum tracking error over all agents for each time $t$. Bottom row: inter-event times along the same experiment, with the envelope between maximum and minimum in gray.}}
    \label{fig:trigger}
\end{figure*}
\begin{rem}
\label{rem:conservativeness}
    {As mentioned before, the gains used in the experiments are set to \(k_0>3.45\) and \(k_1>12.4\), thus complying with the sufficient conditions in \eqref{eq:gains}. A natural concern is the conservativeness of these bounds. In this respect, additional simulations indicate that significantly smaller gains, approximately \(k_0 \approx 1.25\) and \(k_1 \approx 2.25\), still yield stable behavior for the considered scenario. This observation suggests that the gain conditions derived from the Lyapunov analysis are conservative, as expected for sufficient conditions providing global and rigorous guarantees. Developing less conservative design rules while preserving formal guarantees is not trivial and is left as future work.}
\end{rem}

\section{Conclusions}
We presented a Lyapunov framework for distributed differentiation that extends existing high order sliding-mode consensus schemes. By isolating the structural features shared with the super-twisting algorithm and encoding them into an abstract model, we derived explicit gain conditions and established global finite-time convergence. This analysis resolves the lack of systematic tuning guidelines and strengthens previous results that only provided local stability. {We also developed an event-triggered implementation based on constant, time-varying and state dependent event-trigger thresholds. For these options, we ruled out the existence of Zeno behavior and obtained error bounds that scale with the triggering threshold, thereby quantifying the accuracy–communication trade-off in the networked system.} The proposed framework thus provides a principled basis for the design of distributed differentiators, with rigorous guarantees in both continuous and event-triggered settings.

\appendix

\section{Auxiliary Results}

This section collects auxiliary concepts and results on homogeneity \citep{bernuau2014} and convex analysis \citep{rockafellar1970}.  
A function $V:\mathbb{R}^n\to\mathbb{R}$ is called \emph{homogeneous of degree $d$} if  
$$
V(\lambda \mf{x}) = \lambda^d V(\mf{x}), \qquad \forall \lambda>0,~ \mf{x}\in\mathbb{R}^n.
$$

This notion extends naturally to \emph{weighted homogeneity}.  
Given weights $\mf{r}=[r_1,\dots,r_n]$, define the dilation  
$$
\Delta_\lambda = \mathrm{diag}(\lambda^{r_1},\dots,\lambda^{r_n}).
$$
A function $V:\mathbb{R}^n\to\mathbb{R}$ is said to be \emph{$\mf{r}$-homogeneous of degree $d$} if  
$$
V(\Delta_\lambda \mf{x}) = \lambda^d V(\mf{x}), \qquad \forall \lambda>0,~ \mf{x}\in\mathbb{R}^n.
$$

\begin{lem}[\cite{Cruz2019}, Lemma~5]
\label{le:comparison}
Let $V_1,V_2:\mathbb{R}^n\to\mathbb{R}$ be continuous functions, homogeneous with respect to the dilation 
$
\Delta_\lambda = \mathrm{diag}(\lambda^{r_1},\dots,\lambda^{r_n}),
$ 
of degrees $m_1,m_2>0$, respectively.  
Assume $V_1$ is positive definite.  
Then, for every $\mf{x}\in\mathbb{R}^n$,
$$
\underline{v}V_1(\mf{x})^{\frac{m_2}{m_1}}
\;\leq\; V_2(\mf{x}) 
\;\leq\; \overline{v}V_1(\mf{x})^{\frac{m_2}{m_1}},
$$
where
$$
\underline{v} = \inf_{\mf{x}:V_1(\mf{x})=1} V_2(\mf{x}), 
\qquad 
\overline{v} = \sup_{\mf{x}:V_1(\mf{x})=1} V_2(\mf{x}).
$$
\end{lem}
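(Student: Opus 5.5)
The argument uses homogeneity and compactness. We treat the case $m_1,m_2>0$ and $V_1$ continuous and positive definite. We work with the standard weighted dilation $\Delta_\lambda$, whose weights satisfy $r_i>0$, so that $\Delta_\lambda\mf{x}\to\mf{0}$ as $\lambda\to0^+$ and $\|\Delta_\lambda\mf{x}\|\to\infty$ as $\lambda\to\infty$ for $\mf{x}\neq\mf{0}$.

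\emph{Step 1: compactness of the unit level set.} Let $\mathcal{S}_1=\{\mf{x}:V_1(\mf{x})=1\}$. It is closed by continuity of $V_1$ and does not contain the origin. To show it is bounded, take the homogeneous sphere $\Sigma=\{\mf{x}:\|\mf{x}\|_\mf{r}=1\}$, which is compact. On $\Sigma$ the function $V_1$ is continuous and strictly positive, so it attains a minimum $a>0$ and a maximum $b<\infty$. Any $\mf{x}\neq\mf{0}$ can be written as $\mf{x}=\Delta_\lambda\mf{y}$ with $\mf{y}\in\Sigma$ and $\lambda>0$, because the map $\lambda\mapsto\|\Delta_\lambda\mf{y}\|_\mf{r}$ is continuous and strictly increasing from $0$ to $\infty$. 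Then $V_1(\mf{x})=\lambda^{m_1}V_1(\mf{y})$. On $\mathcal{S}_1$ this gives $\lambda^{m_1}V_1(\mf{y})=1$, hence $\lambda\le a^{-1/m_1}$, so $\mathcal{S}_1$ lies in the bounded set $\{\Delta_\lambda\mf{y}:\mf{y}\in\Sigma,\ \lambda\le a^{-1/m_1}\}$. Consequently $\underline{v}$ and $\overline{v}$ are finite and attained by continuity of $V_2$. The same argument shows $\mathcal{S}_1$ is nonempty.

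\emph{Step 2: scaling.} For $\mf{x}=\mf{0}$, homogeneity with positive degree together with continuity gives $V_1(\mf{0})=V_2(\mf{0})=0$, so the inequality holds trivially. For $\mf{x}\neq\mf{0}$, set $\mu=V_1(\mf{x})^{1/m_1}>0$ and $\mf{z}=\Delta_{1/\mu}\mf{x}$. Then $V_1(\mf{z})=\mu^{-m_1}V_1(\mf{x})=1$, so $\mf{z}\in\mathcal{S}_1$. It follows that
\[
V_2(\mf{x})=V_2(\Delta_\mu\mf{z})=\mu^{m_2}V_2(\mf{z}),
\]
with $\underline{v}\le V_2(\mf{z})\le\overline{v}$. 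Since $\mu^{m_2}=V_1(\mf{x})^{m_2/m_1}$, this gives
\[
\underline{v}\,V_1(\mf{x})^{m_2/m_1}\le V_2(\mf{x})\le\overline{v}\,V_1(\mf{x})^{m_2/m_1},
\]
which is the claim.

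The only step that needs care is Step 1, namely the compactness of $\mathcal{S}_1$. It relies on the existence of the homogeneous polar decomposition $\mf{x}=\Delta_\lambda\mf{y}$ with $\mf{y}\in\Sigma$, which follows from the monotonicity of $\|\Delta_\lambda\mf{y}\|_\mf{r}$ in $\lambda$.
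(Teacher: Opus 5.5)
Your proof is correct and follows the standard argument behind this result; the paper gives no proof of its own and imports the lemma from \cite{Cruz2019}. You establish compactness of $\{\mf{x}:V_1(\mf{x})=1\}$ via the homogeneous sphere $\|\mf{x}\|_\mf{r}=1$, then rescale any $\mf{x}\neq\mf{0}$ onto that level set with $\Delta_{V_1(\mf{x})^{-1/m_1}}$. One cosmetic point: the polar decomposition in Step 1 is cleanest written directly as $\mf{y}=\Delta_{1/\|\mf{x}\|_\mf{r}}\mf{x}$, since $\|\Delta_\lambda\mf{x}\|_\mf{r}=\lambda\|\mf{x}\|_\mf{r}$ for the paper's weighted norm.
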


\begin{lem}[\cite{Cruz2019}, Lemma~4]
\label{le:gain:ineq}
Let $\Gamma,\Pi:\mathbb{R}^n\to\mathbb{R}$ be continuous functions, homogeneous with respect to the dilation
$
\Delta_\lambda = \mathrm{diag}(\lambda^{r_1},\dots,\lambda^{r_n}), \mf{r}=[r_1,\dots,r_n],
$
of common degree $m>0$, with $\Gamma(\mf{x})\ge 0$.  
Suppose that
$$
\{\mf{x}\in\mathbb{R}^n\setminus\{\mf{0}\} : \Gamma(\mf{x})=0\} 
~\subseteq~ \{\mf{x}\in\mathbb{R}^n\setminus\{\mf{0}\} : \Pi(\mf{x})<0\}.
$$
Then, for any $c>0$ there exists $k>0$ such that
$$
-k\Gamma(\mf{x}) + \Pi(\mf{x}) \;\le\; -c\|\mf{x}\|_\mf{r}^m,
\qquad \forall \mf{x}\in\mathbb{R}^n.
$$
In particular, the following condition suffices
$$
k \;\ge\; \sup_{\|\mf{x}\|_\mf{r}=1,\Gamma(\mf{x})>0} 
\frac{\Pi(\mf{x})+c}{\Gamma(\mf{x})}.
$$
\end{lem}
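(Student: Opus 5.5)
The plan is to reduce the inequality to the compact homogeneous unit sphere $\mathcal{S}_\mf{r}:=\{\mf{x}:\|\mf{x}\|_\mf{r}=1\}$ and then extend it to all of $\mathbb{R}^n$ by the dilation. First I fix the scaling. The weighted norm $\|\mf{x}\|_\mf{r}=\sum_i|x_i|^{1/r_i}$ is $\mf{r}$-homogeneous of degree $1$, and it vanishes only at the origin. Hence every $\mf{x}\neq\mf{0}$ can be written as $\mf{x}=\Delta_\lambda\mf{y}$ with $\lambda=\|\mf{x}\|_\mf{r}$ and $\mf{y}\in\mathcal{S}_\mf{r}$. Because both $\Gamma$ and $\Pi$ have degree $m$,
\[
-k\Gamma(\mf{x})+\Pi(\mf{x})=\lambda^m\big(-k\Gamma(\mf{y})+\Pi(\mf{y})\big).
\]
It therefore suffices to prove $-k\Gamma(\mf{y})+\Pi(\mf{y})\le -c$ on $\mathcal{S}_\mf{r}$. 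At $\mf{x}=\mf{0}$ both sides vanish by continuity and positive degree. The set $\mathcal{S}_\mf{r}$ is closed and bounded, hence compact.

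Next comes the main step, which is also where I expect the real obstacle. On $\mathcal{S}_\mf{r}\cap\{\Gamma=0\}$ the expression equals $\Pi(\mf{y})$ for every $k$. So the claim can only hold for $c$ below
\[
c^\star:=-\max_{\mf{y}\in\mathcal{S}_\mf{r},\,\Gamma(\mf{y})=0}\Pi(\mf{y}).
\]
This maximum is attained because the set is compact. By hypothesis $\Pi<0$ on that set, so $c^\star>0$; if the set is empty, set $c^\star=+\infty$. The literal ``for any $c>0$'' should be read as ``for any $0<c<c^\star$''. This restriction is harmless for the application in \eqref{eq:c:const}, which only needs the existence of some $c>0$. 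Fix such a $c$ and define
\[
K:=\{\mf{y}\in\mathcal{S}_\mf{r}:\Pi(\mf{y})+c\ge 0\}.
\]
The set $K$ is compact by continuity. It is disjoint from $\{\Gamma=0\}$, since there $\Pi\le -c^\star<-c$. Therefore $\Gamma\ge\underline\Gamma>0$ on $K$ and $\Pi$ is bounded there. Consequently
\[
k^\star:=\max_{K}\frac{\Pi+c}{\Gamma}
\]
is finite; take $k^\star=0$ if $K=\emptyset$.

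Finally I verify the bound for $k\ge\max\{k^\star,0\}$. On $K$ we have $-k\Gamma+\Pi\le -c$ by the choice of $k$. On $\mathcal{S}_\mf{r}\setminus K$ we have $\Pi<-c$ and $\Gamma\ge 0$, so $-k\Gamma+\Pi\le\Pi<-c$. For the explicit sufficient condition, note that any point of $\mathcal{S}_\mf{r}$ with $\Gamma>0$ and $\Pi+c<0$ gives a negative ratio $(\Pi+c)/\Gamma$. Hence $\sup_{\|\mf{x}\|_\mf{r}=1,\Gamma>0}(\Pi+c)/\Gamma$ is at most $\max\{k^\star,0\}$, so it is finite. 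Moreover, any $k\ge$ this supremum, with $k>0$, satisfies $-k\Gamma+\Pi\le -c$ pointwise on $\{\Gamma>0\}$. The points with $\Gamma=0$ are covered by $c<c^\star$. Scaling back by $\lambda^m=\|\mf{x}\|_\mf{r}^m$ then gives the stated inequality on all of $\mathbb{R}^n$.
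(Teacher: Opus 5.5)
The paper gives no proof of this lemma; it is imported from \cite{Cruz2019}, so there is no internal argument to compare against. Your proof is correct and self-contained. You normalize onto the compact homogeneous sphere $\{\|\mf{x}\|_\mf{r}=1\}$, isolate the compact set $K=\{\Pi+c\ge 0\}$, which stays away from the zero set of $\Gamma$, and bound $(\Pi+c)/\Gamma$ on $K$. This is the natural compactness argument, and it also shows that the supremum in the explicit gain condition is finite.

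You are also right that the statement as printed is too strong. At any $\mf{y}\neq\mf{0}$ with $\Gamma(\mf{y})=0$, the left-hand side equals $\Pi(\mf{y})$ whatever $k$ is, so necessarily $c\le c^\star$. Your construction uses $c<c^\star$, which is exactly what keeps $K$ disjoint from $\{\Gamma=0\}$. The displayed sufficient condition needs the same restriction, because the supremum only sees points with $\Gamma>0$. For example, take $n=1$, $\Gamma(x)=\max\{x,0\}^2$, $\Pi(x)=-x^2$ and $c=2$: the condition returns $k\ge 1$, yet the inequality fails at $x=-1$. This is not vacuous in the paper, where $\{\Gamma=0\}=\mathcal{X}_U$ is nontrivial.

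As you note, \eqref{eq:c:const} only needs some $c>0$. The paper's strict condition $k_0>\sup_{\|\mf{x}\|_\mf{r}=1,\,\Gamma>0}\Pi/\Gamma$ is your case $c=0$. It makes $-k_0\Gamma+\Pi$ strictly negative on the compact sphere, hence bounded by some $-c<0$.

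Two small remarks. First, choose $k=\max\{k^\star,0\}+1$ so that $k>0$ strictly. Second, your argument only uses upper semicontinuity of $\Pi$ and lower semicontinuity of $\Gamma$, and it runs verbatim on any closed dilation-invariant domain such as $\mathcal{X}\times\mathcal{X}$. This is worth recording, because the $\Pi$ in \eqref{eq:functions} contains the discontinuous, set-valued $\mathcal{S}$ and is only defined on $\mathcal{X}\times\mathcal{X}$. So it does not literally meet the continuity and domain hypotheses of the lemma as stated.
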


\begin{lem}
\label{le:properties}
Let Assumption~\ref{as:basic} hold. Then the following statements are true:
\begin{enumerate}[label=\roman*)]
    \item \label{prop:fenchel}  
    $U(\mf{x}_0) + U^*(\mf{x}_1) - \langle \mf{x}_0,\mf{x}_1\rangle \ge 0$,  
    with equality if and only if $\mf{x}_1 = \nabla U(\mf{x}_0)$.
    
    \item \label{prop:strict_convex} $U^*$ is strictly convex.
    
    \item \label{prop:pd} $U^*$ is positive definite.
    
    \item \label{prop:hom} $U^*$ is homogeneous of degree $3$.
    
    \item \label{prop:euler} 
    $\langle \nabla U(\mf{x}_0), \mf{x}_0\rangle \ge U(\mf{x}_0)$,  
    $\langle \nabla U^*(\mf{x}_1), \mf{x}_1\rangle \ge U^*(\mf{x}_1)$.
    
    \item \label{prop:dilation} As functions of $(\mf{x}_0,\mf{x}_1)$, the terms $U(\mf{x}_0)$, $U^*(\mf{x}_1)$, and $\langle \mf{x}_0,\mf{x}_1\rangle$ are all $\mf{r}$-homogeneous of degree $3$ with respect to the dilation  
    $
    \Delta_\lambda = \mathrm{diag}(\lambda^{r_1},\dots,\lambda^{r_n}), \mf{r}=[2\mathds{1}^\top,\mathds{1}^\top].
    $
    
    \item \label{prop:gamma} For any $\beta \ge 7$,  
    $$
    U(\mf{x}_0) + (1+\beta)U^*(\mf{x}_1) - 2\langle \mf{x}_0,\mf{x}_1\rangle \;\ge\; 0,
    \quad \forall \mf{x}_0,\mf{x}_1\in\mathcal{X}.
    $$
\end{enumerate}
\end{lem}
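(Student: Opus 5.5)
I will prove the seven items of Lemma~\ref{le:properties} in order, relying throughout on the standard convex-analysis facts for a strictly convex, differentiable, positive definite $U$ on the finite-dimensional space $\mathcal{X}$. Homogeneity of degree $3/2>1$ will be used for superlinear growth.

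\textbf{Items \ref{prop:fenchel}--\ref{prop:hom}.} For \ref{prop:fenchel}, the Fenchel--Young inequality $U^*(\mf{x}_1)\ge\langle\mf{x}_0,\mf{x}_1\rangle-U(\mf{x}_0)$ is immediate from the definition \eqref{eq:dual}. Equality holds exactly when $\mf{x}_0$ attains the supremum. Since $\mf{z}\mapsto\langle\mf{z},\mf{x}_1\rangle-U(\mf{z})$ is strictly concave and differentiable, this happens if and only if $\mf{x}_1=\nabla U(\mf{x}_0)$. Attainment follows from coercivity: homogeneity gives $U(\mf{z})\ge m\|\mf{z}\|^{3/2}$, with $m=\min_{\|\mf{z}\|=1}U>0$ by positive definiteness and compactness of the unit sphere in $\mathcal{X}$.

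For \ref{prop:strict_convex}, note that $\nabla U$ is injective because $U$ is strictly convex. It is also surjective onto $\mathcal{X}$, since the supremum is attained for every $\mf{x}_1$. Hence $U^*$ is differentiable with $\nabla U^*=(\nabla U)^{-1}$ \citep[Thm.~26.5]{rockafellar1970}, and essential strict convexity of $U^*$ follows from differentiability of $U$ (Legendre duality).

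For \ref{prop:pd}, take $\mf{z}=\mf{0}$ in \eqref{eq:dual} to get $U^*\ge 0$. Equality forces the maximizer $\nabla U^*(\mf{x}_1)=\mf{z}$ to satisfy $\langle\mf{z},\mf{x}_1\rangle=U(\mf{z})$ together with $\mf{x}_1=\nabla U(\mf{z})$. Since $\langle\nabla U(\mf{z}),\mf{z}\rangle=\tfrac32U(\mf{z})$, this gives $U(\mf{z})=0$, so $\mf{z}=\mf{0}$. Then $\mf{x}_1=\nabla U(\mf{0})=\mf{0}$, where $\nabla U(\mf{0})=\mf{0}$ because $U$ attains its minimum at $\mf{0}$.

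For \ref{prop:hom}, substitute $\mf{z}=\lambda^2\mf{w}$ in $U^*(\lambda\mf{x}_1)=\sup_{\mf{z}}\{\lambda\langle\mf{z},\mf{x}_1\rangle-U(\mf{z})\}$. This gives $\lambda^3\sup_{\mf{w}}\{\langle\mf{w},\mf{x}_1\rangle-U(\mf{w})\}$, since $U(\lambda^2\mf{w})=\lambda^3U(\mf{w})$.

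\textbf{Items \ref{prop:euler} and \ref{prop:dilation}.} Euler's identity for differentiable homogeneous functions gives $\langle\nabla U(\mf{x}_0),\mf{x}_0\rangle=\tfrac32U(\mf{x}_0)$ and $\langle\nabla U^*(\mf{x}_1),\mf{x}_1\rangle=3U^*(\mf{x}_1)$. Combined with nonnegativity of $U$ and $U^*$, both inequalities in \ref{prop:euler} follow. For \ref{prop:dilation}, under the dilation $(\mf{x}_0,\mf{x}_1)\mapsto(\lambda^2\mf{x}_0,\lambda\mf{x}_1)$ each term is multiplied by $\lambda^3$: $U$ by degree $3/2$ applied to $\lambda^2$, $U^*$ by degree $3$, and the bilinear pairing by $\lambda^2\cdot\lambda$.

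\textbf{Item \ref{prop:gamma}.} I expect this item to be the main obstacle, because the constant $7$ must be obtained in a structure-independent way. The plan is to split the expression as
\[
\big(U(\mf{x}_0)+aU^*(\mf{x}_1)-2\langle\mf{x}_0,\mf{x}_1\rangle\big)+(1+\beta-a)U^*(\mf{x}_1)
\]
and to exploit the scaling freedom in Fenchel--Young. Applying \ref{prop:fenchel} to the pair $(\mf{x}_0,\,2\mf{x}_1)$ gives
\[
2\langle\mf{x}_0,\mf{x}_1\rangle\le U(\mf{x}_0)+U^*(2\mf{x}_1)=U(\mf{x}_0)+8U^*(\mf{x}_1),
\]
using \ref{prop:hom}. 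Hence
\[
U(\mf{x}_0)+(1+\beta)U^*(\mf{x}_1)-2\langle\mf{x}_0,\mf{x}_1\rangle\ge(\beta-7)U^*(\mf{x}_1)\ge0
\]
whenever $\beta\ge7$, which explains the constant exactly. All steps are restricted to $\mathcal{X}$, which is a vector space, so the compactness and homogeneity arguments remain valid there.
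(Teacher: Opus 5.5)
Your proof is correct. For items iv), vi) and vii) it matches the paper: the same substitution $\mf{z}=\lambda^2\mf{w}$, the same dilation check, and the same Fenchel--Young step at $(\mf{x}_0,2\mf{x}_1)$ with $U^*(2\mf{x}_1)=8U^*(\mf{x}_1)$, which is where the paper's constant $7$ comes from. The differences are elsewhere.

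For item v), the paper does not use homogeneity. It applies the tangent-plane inequality $f(\mf{0})\ge f(\mf{x})-\langle\nabla f(\mf{x}),\mf{x}\rangle$ with $f(\mf{0})=0$, which holds for any differentiable convex function vanishing at the origin. Your Euler-identity route needs item iv) and differentiability of $U^*$, which you establish in item ii). In return it gives the sharper equalities $\langle\nabla U(\mf{x}_0),\mf{x}_0\rangle=\tfrac32U(\mf{x}_0)$ and $\langle\nabla U^*(\mf{x}_1),\mf{x}_1\rangle=3U^*(\mf{x}_1)$.

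For items i)--iii), the paper only cites Fenchel--Young and standard duality results. In iii) it shows just $U^*\ge0$ and $U^*(\mf{0})=0$, leaving strict positivity away from the origin implicit; that gap would close via the strict convexity in ii). Your argument fills in what the paper leaves to citations. The coercivity bound $U(\mf{z})\ge m\|\mf{z}\|^{3/2}$ gives finiteness of $U^*$, attainment of the supremum, and surjectivity of $\nabla U$. That justifies $\nabla U^*=(\nabla U)^{-1}$, which the paper also relies on later for $\mathcal{X}_U$. Your Euler-based argument that $U^*(\mf{x}_1)=0$ forces $\mf{x}_1=\mf{0}$ then makes the positive-definiteness claim in iii) explicit.
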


\begin{pf}
Item~\ref{prop:fenchel} is the standard Fenchel--Young inequality, which holds by the construction of $U^*$ in \eqref{eq:dual}.  Item~\ref{prop:strict_convex} follows from standard results in convex analysis on the dual \cite[Section 3.3.1]{boyd_convex}. For Item~\ref{prop:pd}, recall that Item~\ref{prop:fenchel} holds for all $\mf{x}_0,\mf{x}_1$.  
Setting $\mf{x}_0=\mf{0}$ yields
$$
U^*(\mf{x}_1) \;\ge\; \langle \mf{0},\mf{x}_1\rangle - U(\mf{0}) = 0.
$$
Moreover,
$$
U^*(\mf{0}) = \sup_{\mf{x}_0}\{-U(\mf{x}_0)\} = -\inf_{\mf{x}_0} U(\mf{x}_0) = 0,
$$
since $U$ is positive definite. Hence $U^*$ is also positive definite.  For Item~\ref{prop:hom}, observe that
$$
\begin{aligned}
U^*(\lambda \mf{x}_1) 
&= \sup_{\mf{x}_0\in\mathcal{X}}\{\langle \mf{x}_0,\lambda \mf{x}_1\rangle - U(\mf{x}_0)\} \\
&= \sup_{\mf{x}_0'\in\mathcal{X}}\{\langle \lambda^2 \mf{x}_0', \lambda \mf{x}_1\rangle - U(\lambda^2 \mf{x}_0')\} \\
&= \sup_{\mf{x}_0'\in\mathcal{X}}\{\lambda^3\langle \mf{x}_0', \mf{x}_1\rangle - \lambda^3 U(\mf{x}_0')\} \\
&= \lambda^3 U^*(\mf{x}_1),
\end{aligned}
$$
where the change of variables $\mf{x}_0=\lambda^2 \mf{x}_0'$ was used, together with the $3/2$-homogeneity of $U$.  For Item~\ref{prop:euler}, let $f:\mathcal{X}\to\mathbb{R}$ be any differentiable positive definite convex function (e.g., $U$ or $U^*$). Convexity yields
$$
f(\mf{x}') \;\ge\; f(\mf{x}) + \langle \nabla f(\mf{x}), \mf{x}'-\mf{x}\rangle.
$$
Setting $\mf{x}'=\mf{0}$ gives
$$
0=f(\mf{0}) \;\ge\; f(\mf{x}) - \langle \nabla f(\mf{x}), \mf{x}\rangle,
$$
which implies $\langle \nabla f(\mf{x}), \mf{x}\rangle \ge f(\mf{x})$.  
Applying this relation to $U$ and $U^*$ gives the claim. For Item~\ref{prop:dilation}, note that
$$
U(\lambda^2 \mf{x}_0) = (\lambda^2)^{3/2} U(\mf{x}_0) = \lambda^3 U(\mf{x}_0),
$$
$$
\langle \lambda^2 \mf{x}_0, \lambda \mf{x}_1\rangle = \lambda^3 \langle \mf{x}_0, \mf{x}_1\rangle,
$$
and $U^*$ is homogeneous of degree $3$ by Item~\ref{prop:hom}. Thus each term is $\mf{r}$-homogeneous of degree $3$ under $\mf{r}=[2\mathds{1}^\top,\mathds{1}^\top]$. Finally, for Item~\ref{prop:gamma}, the Fenchel--Young inequality implies
$$
-\langle \mf{x}_0, 2\mf{x}_1\rangle \;\ge\; -U^*(2\mf{x}_1) - U(\mf{x}_0).
$$
Therefore,
$$
\begin{aligned}
&U(\mf{x}_0) + (1+\beta)U^*(\mf{x}_1) - 2\langle \mf{x}_0,\mf{x}_1\rangle \\
&\ge (1+\beta)U^*(\mf{x}_1) - U^*(2\mf{x}_1) \\
&= (1+\beta)U^*(\mf{x}_1) - 2^3 U^*(\mf{x}_1) \\
&= (\beta-7) U^*(\mf{x}_1) \;\ge\; 0,
\end{aligned}
$$
for all $\beta\ge 7$.  
\qed
\end{pf}

\end{document}